\documentclass[submission,copyright,creativecommons]{eptcs}
\providecommand{\event}{GandALF 2026} 

\usepackage{iftex}

\ifpdf
  \usepackage{underscore}         
  \usepackage[T1]{fontenc}        
\else
  \usepackage{breakurl}           
\fi

\title{Non-Wellfounded and Cyclic Proofs for LTL:\\ A Syntactic Correspondence with Linear Nested Sequents}
\author{Tim S. Lyon
\institute{TU Dresden \\ Nöthnitzer Straße 46, 01187 Dresden, Germany}
\email{\quad timothy\_stephen.lyon@tu-dresden.de}
\and
Lukas Zenger
\institute{Peking University\\
5 Yiheyuan Rd., Haidian District,
100871 Beijing,
China}
\email{\quad lukas.zenger@pku.edu.cn}
}
\def\titlerunning{Non-Wellfounded and Cyclic Proofs for LTL}
\def\authorrunning{T.S. Lyon \& L. Zenger}

\usepackage{amssymb, amsmath, amsthm}
\usepackage{color}
\usepackage{bussproofs}
\usepackage{xspace}
\usepackage{upgreek}
\usepackage{multicol}
\usepackage{stmaryrd}
\usepackage{mathrsfs}
\usepackage{tikz}
\usetikzlibrary{decorations.pathmorphing}

\usepackage{algpseudocode}
\usepackage[ruled,vlined,linesnumbered]{algorithm2e}
\usepackage{mathdots}

\newtheorem{theorem}{Theorem}[section]
\newtheorem{lemma}[theorem]{Lemma}
\newtheorem{definition}[theorem]{Definition}
\newtheorem{example}[theorem]{Example}
\newtheorem{corollary}[theorem]{Corollary}

\newcommand{\case}[1]{%
  \smallskip
  \noindent\textsc{#1}
}

\newcommand{\fig}{Figure}

\newcommand{\iffi}{\textit{iff} }
\def\phi{\varphi}

\definecolor{tim}{RGB}{0, 0, 250}

\newcommand{\sufo}[1]{\mathrm{sufo}(#1)}

\newcommand{\clo}[1]{\mathsf{CL}(#1)}

\newcommand{\imp}{\rightarrow}
\newcommand{\lang}{\mathscr{L}}

\newcommand{\atm}{\mathsf{Atm}}

\newcommand{\X}{\mathtt{X}}

\newcommand{\until}{\mathtt{U}}

\newcommand{\ltl}{\mathsf{LTL}}

\newcommand\citet[1]{\citeauthor{#1}~(\citeyear{#1})}

\newcommand{\lnsg}{\mathcal{G}}
\newcommand{\lnsh}{\mathcal{H}}
\newcommand{\lnsk}{\mathcal{K}}
\newcommand{\lnsf}{\mathcal{F}}
\newcommand{\lnsi}{\mathcal{I}}
\newcommand{\lnsj}{\mathcal{J}}
\newcommand{\sep}{\sslash}
\newcommand{\id}{\mathsf{id}}
\newcommand{\botl}{\bot\mathsf{L}}

\newcommand{\disr}{\lor \mathsf{R}}
\newcommand{\conr}{\land\mathsf{R}}

\newcommand{\impl}{{\imp}\mathsf{L}}
\newcommand{\impr}{{\imp}\mathsf{R}}
\newcommand{\nextri}{\X \mathsf{R}_{1}}
\newcommand{\nextrii}{\X \mathsf{R}_{2}}
\newcommand{\nextli}{\X \mathsf{L}_{1}}
\newcommand{\nextlii}{\X \mathsf{L}_{2}}

\newcommand{\untilli}{\until \mathsf{L}_{1}}
\newcommand{\untillii}{\until \mathsf{L}_{2}}

\newcommand{\untilri}{\until \mathsf{R}_{1}}
\newcommand{\untilrii}{\until \mathsf{R}_{2}}

\newcommand{\calc}{\mathsf{LNS_{LTL}}}
\newcommand{\calci}{\mathsf{LNS^{\infty}_{LTL}}}
\newcommand{\calccyc}{\mathsf{LNS^{cyc}_{LTL}}}
\newcommand{\sts}{\bar{\sigma}}

\newcommand{\sar}{\vdash}

\newcommand{\prf}{\pi}
\newcommand{\fint}{f}

\newcommand{\tri}{\tau}

\newcommand{\tuple}[1]{( #1 )}

\newcommand{\semrel}{\vDash}
\newcommand{\sat}{\vDash}

\newcommand{\der}{\pi} 

\newcommand{\set}[1]{\{#1\}}

\newcommand{\branch}{\mathcal{B}}
\newcommand{\lnspath}{\mathcal{P}}

\newcommand{\comp}[1]{\mathsf{cp}(#1)}

\newcommand{\last}[1]{\mathsf{end}(#1)}

\newcommand{\model}{\bar{\sigma}}

\newcommand{\bracket}[1]{\langle #1 \rangle}

\newcommand{\proves}{\Vdash}

\newcommand{\size}[1]{\left\Vert#1\right\Vert}

\newcommand{\ru}{\mathsf{r}}

\newcommand{\shift}{h} 

\newcommand{\nshole}{\{\cdot\}}

\newcommand{\meq}{=_{M}}

\newcommand{\idxset}{I}

  {\gdef\scalefactor{#1}\begin{center}\proofSkipAmount \leavevmode}%
  {\scalebox{\scalefactor}{\DisplayProof}\proofSkipAmount \end{center} }

\begin{document}
\maketitle

\begin{abstract}
We introduce and investigate non-wellfounded and cyclic linear nested sequent calculi, and, as a case study, develop such systems for linear temporal logic ($\ltl$). The paper addresses two central problems, which we call \emph{cycle recognition} and \emph{unraveling}. Cycle recognition concerns identifying cycles in non-wellfounded proofs in order to extract corresponding cyclic proofs, while unraveling studies the converse transformation, from cyclic proofs to non-wellfounded ones. Although these processes are well understood for Gentzen sequents, they have received little attention for more expressive sequent formalisms and become more challenging in the linear nested sequent setting. To address cycle recognition, we show the completeness of non-wellfounded proofs relative to a particular normal form exhibiting a property we call \emph{saturation recurrence}, which enables the systematic extraction of cyclic proofs. To address unraveling, we introduce a specialized procedure that shifts rule applications forward along linear nested sequents, allowing non-wellfounded proofs to be reconstructed from cyclic ones. Overall, our work provides new proof-theoretic techniques for cycle recognition and unraveling in expressive multisequent formalisms.
\end{abstract}

\section{Introduction}


Modal fixed-point and program logics extend modal languages with operators for expressing inductive and coinductive definitions. 
Due to these constructs, such logics often require reasoning principles that go beyond finite, well-founded proofs~\cite{AfsLei17,ForSan13,Min78}. One approach to obtain cut-free and analytic proof systems for such logics is to employ \emph{non-wellfounded proofs}, which take the form of non-wellfounded trees and rely on global soundness conditions to ensure correctness. When such proofs have the shape of a \emph{regular tree}--containing only finitely many distinct subtrees--they can be folded into finite proofs in which certain leaves are linked back to internal nodes, thereby creating cycles. This gives rise to \emph{cyclic proofs}, which  provide a finite representation of non-wellfounded proofs and are amenable to automated proof-search~\cite{RooZen22} and to computing interpolants~\cite{AfsLei22,Sha14}.

In the setting of \emph{Gentzen sequents} (pairs of (multi)sets of logical formulae), the relationship between non-wellfounded and cyclic proofs is well understood. As alluded to above, if a non-wellfounded proof can be transformed into a regular tree, then it is straightforwardly convertible into a cyclic proof; conversely, one can unravel a cyclic proof into a non-wellfounded proof. By contrast, the correspondence between non-wellfounded and cyclic proofs has received far less attention in the context of \emph{multisequents}.

Multisequents are generalizations of Gentzen sequents, obtained by embedding Gentzen sequents into more complex data structures. Examples include hypersequents (multisets of Gentzen sequents)~\cite{Avr96,Pot83}, linear nested sequents (lines of Gentzen sequents)~\cite{Mas92,Lel15}, and labeled sequents (graphs of Gentzen sequents)~\cite{Sim94,Vig00}. For a survey of multisequent systems and their relationships, see~\cite{LyoEtAl25}. Multisequents were introduced over the past few decades to provide cut-free and analytic sequent-style systems for modal and related logics--particularly for logics where a cut-free Gentzen sequent calculus was not known to exist. 
Multisequent systems often enjoy properties that can make them better suited for proof analysis and applications than traditional Gentzen systems. For instance, many multisequent calculi consist entirely of invertible rules, feature symmetric rules that facilitate cut-elimination, and use structures conducive to counter-model extraction. 

Research on non-wellfounded multisequent calculi remains nascent, with only a handful of works on non-wellfounded hypersequent or labeled sequent systems. A set of non-wellfounded and cyclic hypersequent systems for modal logics with the master modality characterized by `simple' first-order conditions was provided by Rooduijn~\cite{Roo21,rooduijn_2024}, while Das and Girlando provided a cyclic hypersequent system for transitive closure logic~\cite{DasGir23}. Docherty and Rowe provided a cut-free non-wellfounded labeled system for propositional dynamic logic as well as a cyclic labeled system that includes cut~\cite{DocRow19}.  More recently, Das et al.~\cite{DasGieMar24} presented non-wellfounded labeled calculi for classical and intuitionistic Gödel-Löb logic, with Aguilera and Pacheco subsequently providing a cyclic version of the intuitionistic system~\cite{AguPac25}. Furthermore, Afshari et al.~\cite{afshari_ill-founded_2023} provided non-wellfounded calculi for intuitionistic temporal logics which employ formula nesting, and their work was improved upon by Men\'endez Turata~\cite{menendez_2024}, who obtained a cut-free labeled cyclic proof system. Thus, for multisequents more complex than hypersequents, the works by Aguilera and Pacheco and by Men\'endez Turata serve as the only examples of cut-free cyclic multisequent systems.


Our aim in this paper is to advance this line of research by providing a deeper understanding of cyclic proofs in the context of multisequents. In particular, we focus on non-wellfoundedness and cyclicity within the \emph{linear nested sequent (LNS)} formalism. Linear nested sequents, which consist of ordered lines of Gentzen sequents, generalize hypersequents. This formalism was introduced by Lellmann~\cite{Lel15} and builds on the 2-sequent framework originally developed by Masini~\cite{Mas92,Mas93}. As a case study, we provide cut-free non-wellfounded and cyclic linear nested sequent calculi for linear temporal logic ($\ltl$), which is our first contribution in this paper. We then study and address two central problems, which we call \emph{cycle recognition} and \emph{unraveling}. The cycle recognition problem asks how to identify cycles in non-wellfounded proofs in order to construct a corresponding cyclic proof, whereas the unraveling problem concerns the reverse process: transforming a cyclic proof into a non-wellfounded one.

Detecting cycles in non-wellfounded multisequent proofs is particularly challenging because multisequents employ richer structures than traditional Gentzen sequents. Repetitions do not recur verbatim along infinite branches; instead, structures tend to grow. Consequently, cycles must be identified not as identical repeated sequents, but as recurring growth patterns. We resolve this problem by showing that every $\ltl$ theorem can be transformed into a non‑wellfounded proof satisfying the \emph{saturation recurrence property}. This guarantees that repetitive growth patterns can be detected along infinite branches, thereby enabling the extraction of cyclic proofs. This constitutes our second contribution.

Unraveling cyclic proofs in this setting is likewise non-trivial since the standard unraveling procedure is inapplicable due to the more complex structures used in proofs. As a third contribution, we show how this problem can be addressed by introducing a specialized unraveling procedure that shifts rule applications forward along linear nested sequents. This makes it possible to reconstruct a non-wellfounded proof in which linear nested sequents exhibit repetitive growth along infinite branches. Since the non-wellfounded system for $\ltl$ is proven sound and complete, the above two transformations yield soundness and completeness of the cyclic system as a corollary.\\


\noindent
\textbf{LTL and Related Systems.} Linear temporal logic ($\ltl$) is an important program logic introduced by Pnueli~\cite{Pnu77} for reasoning about dynamically evolving systems. 
The logic is equipped with the temporal operators \emph{next} $\X$ and \emph{until} $\until$, which enable the specification of infinite behaviors and execution traces. This gives 
$\ltl$ considerable expressive power while still admitting effective decision procedures. 
Due to these favorable properties, $\ltl$ is widely used in model checking and formal verification, where properties such as safety (e.g., ‘no bad state will be reached’) and liveness (e.g., ‘a good state will eventually be reached’) can be specified and automatically checked.

A variety of sequent calculi have been proposed for $\ltl$. Paech introduced a Gentzen-style system for the logic~\cite{Pae89}; however, this system contains an induction rule that is an instance of cut, and is therefore not cut-free. To address this issue, Brünnler and Lange~\cite{BruLan08} later developed a cut-free cyclic Gentzen calculus that employs annotations--called \emph{histories}--to detect cycles in proofs. A variant of Brünnler and Lange's calculus was discussed by Kokkinis and Studer~\cite{KokStu16} for the unary fragment of $\ltl$ with weakening syntactically admissible. More recently, Alonderis et al.~\cite{AloEtAl20} provided a non-annotated variant of the Brünnler-Lange calculus for full $\ltl$. Finally, Boretti~\cite{Bor08} introduced an infinitary labeled system for $\ltl$, which relies on an $\omega$-rule with infinitely many premises, rather than on non-wellfounded proofs, to establish validity. Since our linear nested sequent systems fall within the paradigm of non-wellfounded proof theory, they are most closely related to~\cite{BruLan08,KokStu16,AloEtAl20}. However, they differ in that we employ the richer formalism of linear nested sequents rather than traditional Gentzen sequents, our calculus is \emph{formula-driven} (i.e., it does not contain any structural rules), and all rules are invertible.\\ 

\noindent
\textbf{Outline of Paper.} In Section~\ref{sec:prelims} we introduce the preliminaries for $\ltl$. Section~\ref{sec:calculi} introduces our non-wellfounded and cyclic linear nested sequent systems for $\ltl$. The non-wellfounded system is shown to be sound and complete in Section~\ref{sec:sound-compl}. Section~\ref{sec:correspondence} provides our solutions to the cycle recognition and unraveling problems, establishing a syntactic correspondence between non-wellfounded and cyclic linear nested sequent proofs, which entails the soundness and completeness of the cyclic system. Finally, in Section~\ref{sec:conclusion} we conclude and discuss future work.

\section{Logical Preliminaries: Linear Temporal Logic}\label{sec:prelims}

Let $\atm := \{p,q,r,\ldots\}$ be a denumerable set of \emph{(propositional) atoms}. We define the language $\lang$ to be the collection of all formulae $A$ generated via the following grammar in BNF:
$$
A, B ::= p \mid \bot \mid (A \imp B) \mid \X A \mid (A \until B)
$$
with $p \in \atm$. We use $A$, $B$, $C$, $\ldots$ to denote formulae from $\lang$. We have opted to use a minimal signature for $\ltl$ to simplify the presentation of our linear nested sequent systems in the subsequent section. We define $\neg A := A \imp \bot$, $A \lor B := \neg A \imp B$, and $A \land B := \neg (A \imp \neg B)$. A \emph{complex formula} is a formula of the form $A \imp B$, $\X A$, or $A \until B$. The \emph{length} $\ell(A)$ of a formula $A$ is defined to be the number of symbols it contains. 
We let $\sufo{A}$ denote the set of all \emph{subformulae} of $A$, defined in the usual way, and call a formula $B$ a \emph{subformula} of $A$ \iffi $B \in \sufo{A}$. For a set of formulae $\Gamma$, we define $\ell(\Gamma) := \sum_{A \in \Gamma} \ell(A)$ and $\sufo{\Gamma} := \bigcup_{A \in \Gamma} \sufo{A}$.

\begin{definition}\label{def:semantics} We define a \emph{state sequence} to be an infinite sequence of states $\bar\sigma=\tuple{\sigma_0,\sigma_1,\ldots}$, where each state $\sigma_i \subseteq \atm$. We define the \emph{satisfaction} of a formula $A \in \lang$ on $\bar\sigma$ at the time point $i \ge0$, written $\bar\sigma, i \sat A$, as follows:
\begin{itemize}
\item $\bar\sigma, i \sat p$ \iffi  $p \in \sigma_i$;



\item $\bar\sigma, i \not\sat \bot$;

\item $\bar\sigma, i \sat A \imp B$ \iffi $\bar\sigma, i \not\sat A$ or $\bar\sigma, i \sat B$;


\item $\bar\sigma, i \sat \X A$ \iffi $\bar\sigma, i+1 \sat A$;

\item $\bar\sigma, i \sat A \until B$ \iffi there is a $j \geq i$ such that $\bar\sigma, j \sat B$ and for every $i \leq n < j$, $\bar\sigma, n \sat A$;



\item $\bar\sigma \sat A$ \iffi $\bar\sigma, 0 \sat A$.

\end{itemize}
We say that $A$ is \emph{valid}, written $\sat A$, \iffi for each state sequence $\bar\sigma$ we have $\bar\sigma \sat A$; otherwise, a formula $A$ is \emph{invalid}, written $\not\sat A$. We define the logic $\ltl$ to be the set of all valid formulae from $\lang$.
\end{definition}

\section{Linear Nested Sequents}\label{sec:calculi}

We define a \emph{Gentzen sequent} to be an expression of the form $\Gamma \sar \Delta$ such that $\Gamma$ and $\Delta$ are finite sets of formulae from $\lang$. A \emph{linear nested sequent (LNS)} is an expression of the form $\lnsg := \Gamma_{0} \sar \Delta_{0} \sep \cdots \sep \Gamma_{n} \sar \Delta_{n}$ such that $\Gamma_{i} \sar \Delta_{i}$ is a Gentzen sequent for $0 \leq i \leq n$. 
We use $\lnsg$, $\lnsh$, $\lnsk$, $\ldots$ to denote LNSs. Given an LNS of the above form, we define $\Gamma_{i} \sar \Delta_{i}$ to be the \emph{$i$-component} of the LNS, which we refer to as a \emph{component} more generally if the index $i$ is not of importance. For an LNS $\lnsg$ in the above form, we refer to the $(n-1)$-component as the \emph{penultimate component}, to the $n$-component as the \emph{end component}, and to any $k$-component such that $k < n$ as an \emph{interior component}, that is, an interior component is any component that is \emph{not} the end component. We call $\Gamma_{i}$ the \emph{antecedent} and $\Delta_{i}$ the \emph{consequent} of a component $\Gamma_{i} \sar \Delta_{i}$. 


We interpret LNSs by means of their \emph{formula interpretation}, defined below:
$$
\fint(\Gamma \sar \Delta) := \bigwedge \Gamma \imp \bigvee \Delta
\quad
\fint(\Gamma \sar \Delta \sep \lnsg) := \bigwedge \Gamma \imp (\bigvee \Delta \lor \X \fint(\lnsg))
$$
For a state sequence $\bar\sigma$, we define $\bar\sigma, i \semrel \lnsg$ \iffi $\bar\sigma, i \semrel \fint(\lnsg)$, and say that an LNS $\lnsg$ is (in)valid \iffi $\fint(\lnsg)$ is (in)valid. The subsequent lemma follows directly from the definition of the formula interpretation.
\begin{lemma}\label{lem: formula interpretation}
    Let $\lnsg = \Gamma_0 \sar \Delta_0 \sep \ldots \sep \Gamma_n \sar \Delta_n$ be an LNS and $\bar \sigma$ a state sequence. The following holds:
   \begin{equation*}
       \bar \sigma \sat \lnsg \text{ iff there exists } 0 \leq i \leq n \text{ such that } \bar \sigma, i \sat f(\Gamma_i \sar \Delta_i).
   \end{equation*}
\end{lemma}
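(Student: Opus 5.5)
The plan is to prove a slightly stronger statement by induction on the number $n$ of components, quantifying over all time points. Specifically, for every LNS $\lnsg = \Gamma_0 \sar \Delta_0 \sep \ldots \sep \Gamma_n \sar \Delta_n$, every state sequence $\bar\sigma$, and every $k \geq 0$, we show
\[
\bar\sigma, k \sat \fint(\lnsg) \text{ iff there exists } 0 \leq i \leq n \text{ such that } \bar\sigma, k+i \sat \fint(\Gamma_i \sar \Delta_i).
\]
The lemma is then the instance $k = 0$, since $\bar\sigma \sat \lnsg$ means $\bar\sigma, 0 \sat \fint(\lnsg)$. We need the generalization to arbitrary $k$ because the recursive clause of $\fint$ shifts evaluation forward through $\X$. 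The satisfaction clauses for the defined connectives $\land$, $\lor$, $\neg$ follow from Definition~\ref{def:semantics} by unfolding the abbreviations. We also use the conventions that the empty conjunction is $\neg\bot$ and the empty disjunction is $\bot$.

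\emph{Base case} ($n=0$). Here $\fint(\lnsg) = \fint(\Gamma_0 \sar \Delta_0)$, and the right-hand side reduces to $i=0$, so there is nothing to show.

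\emph{Inductive step.} Write $\lnsg = \Gamma_0 \sar \Delta_0 \sep \lnsh$, where $\lnsh = \Gamma_1 \sar \Delta_1 \sep \ldots \sep \Gamma_n \sar \Delta_n$ has $n-1$ components after its first. By definition, $\bar\sigma, k \sat \fint(\lnsg)$ holds iff either $\bar\sigma, k \not\sat \bigwedge\Gamma_0$, or $\bar\sigma, k \sat \bigvee\Delta_0$, or $\bar\sigma, k \sat \X\fint(\lnsh)$. This uses the semantics of $\imp$ and of the defined $\lor$, which is classical disjunction.
\begin{itemize}
\item The first two disjuncts together say exactly that $\bar\sigma, k \sat \fint(\Gamma_0 \sar \Delta_0)$.
\item The third disjunct is equivalent, by the $\X$ clause, to $\bar\sigma, k+1 \sat \fint(\lnsh)$.
\item By the induction hypothesis applied at time $k+1$, this holds iff there is $0 \leq j \leq n-1$ with $\bar\sigma, k+1+j \sat \fint(\Gamma_{j+1} \sar \Delta_{j+1})$.
\end{itemize}
Reindexing with $i = j+1$ turns the last condition into the existence of $1 \leq i \leq n$ with $\bar\sigma, k+i \sat \fint(\Gamma_i\sar\Delta_i)$. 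Combining this with the $i=0$ case from the first bullet gives the claim.

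The only real obstacle is the index bookkeeping: the induction has to be carried out with the time offset $k$ free, since fixing $k=0$ makes the hypothesis too weak. With that choice the rest is routine unfolding of the definitions. I also read the statement's "$\bar\sigma, i \sat \fint(\Gamma_i \sar \Delta_i)$" as evaluating the $i$-component at time point $i$, which is exactly what the $k=0$ instance of the generalization yields.
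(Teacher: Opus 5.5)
Your proposal is correct and matches the paper's approach: the paper gives no written proof, saying only that the lemma ``follows directly from the definition of the formula interpretation.'' Your induction on the number of components, with the time offset $k$ left free so that the $\X$ in the recursive clause can be absorbed, is exactly the routine unfolding that remark has in mind.
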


We define the multiset of components of a linear nested sequent $\lnsg = \Gamma_{0} \sar \Delta_{0} \sep \cdots \sep \Gamma_{n} \sar \Delta_{n}$ to be $\comp{\lnsg} := \set{\Gamma_{i} \sar \Delta_{i} \mid 0 \leq i \leq n}$. The \emph{length} of an LNS $\lnsg$ is the number of components it contains, i.e., $\size{\lnsg} := |\comp{\lnsg}|$. For an LNS $\lnsg$, we let $\lnsg(i)$ be the $i$-component of $\lnsg$, if $0 \leq i < \size{\lnsg}$, and we let $\last{\lnsg} := \lnsg(\size{\lnsg}-1)$ denote the end component of $\lnsg$.

A \emph{context} is an LNS with a \emph{hole} $\nshole$, which takes the place of a Gentzen sequent in a nested sequent (cf.~\cite{Bru09,Lel15}). For example, $\lnsg\nshole = A,B \sar \sslash \nshole \sslash C \sar D$ is a context. In a context $\lnsg\nshole$, we can substitute an LNS $\lnsh$ for the hole to obtain an LNS $\lnsg\{\lnsh\}$. For example, if we substitute $\lnsh = E \sar F \sslash G \sar $ in the context $\lnsg\nshole$ above, we obtain: $\lnsg\set{\lnsh} = A,B \sar \sslash E \sar F \sslash G \sar \sslash C \sar D$. We may write $\lnsg\set{\Gamma \sar \Delta}_{i}$ to indicate that $\Gamma \sar \Delta$ is the $i$-component of $\lnsg$.


\begin{figure*}[t]
\noindent\hrule

\begin{center}
\begin{tabular}{c c c}
\AxiomC{$\phantom{\lnsg}$}
\RightLabel{$\id$}
\UnaryInfC{$\lnsg \sep \Gamma, p \sar p, \Delta \sep \lnsh$}
\DisplayProof

&

\AxiomC{$\phantom{\lnsg}$}
\RightLabel{$\botl$}
\UnaryInfC{$\lnsg \sep \Gamma, \bot \sar \Delta \sep \lnsh$}
\DisplayProof
\end{tabular}
\end{center}


\begin{center}
\begin{tabular}{c c}
\AxiomC{$\lnsg \sep \Gamma, A \sar  B, \Delta \sep \lnsh$}
\RightLabel{$\impr$}
\UnaryInfC{$\lnsg \sep \Gamma \sar A \imp B, \Delta \sep \lnsh$}
\DisplayProof

&

\AxiomC{$\lnsg \sep \Gamma, B \sar \Delta \sep \lnsh$}
\AxiomC{$\lnsg \sep \Gamma \sar A, \Delta \sep \lnsh$}
\RightLabel{$\impl$}
\BinaryInfC{$\lnsg \sep \Gamma, A \imp B \sar \Delta \sep \lnsh$}
\DisplayProof
\end{tabular}
\end{center}


\begin{center}
\begin{tabular}{c c c}
\AxiomC{$\lnsg \sep \Gamma \sar \Delta \sep \Sigma, A \sar \Pi \sep \lnsh$}
\RightLabel{$\nextli$}
\UnaryInfC{$\lnsg \sep \Gamma, \X A \sar \Delta \sep \Sigma \sar \Pi \sep \lnsh$}
\DisplayProof

&

\AxiomC{$\lnsg \sep \Gamma \sar \Delta \sep A \sar $}
\RightLabel{$\nextlii$}
\UnaryInfC{$\lnsg \sep \Gamma, \X A \sar \Delta$}
\DisplayProof

&

\AxiomC{$\lnsg \sep \Gamma \sar \Delta \sep \Sigma \sar A, \Pi \sep \lnsh$}
\RightLabel{$\nextri$}
\UnaryInfC{$\lnsg \sep \Gamma \sar \X A, \Delta \sep \Sigma \sar \Pi \sep \lnsh$}
\DisplayProof
\end{tabular}
\end{center}


\begin{center}
\begin{tabular}{c c}
\AxiomC{$\lnsg \sep \Gamma \sar \Delta \sep \sar A$}
\RightLabel{$\nextrii$}
\UnaryInfC{$\lnsg \sep \Gamma \sar \X A, \Delta$}
\DisplayProof

&

\AxiomC{$\lnsg \sep \Gamma, B \sar \Delta \sep \Sigma \sar \Pi \sep \lnsh$}
\AxiomC{$\lnsg \sep \Gamma, A \sar \Delta \sep \Sigma, A \until B  \sar \Pi \sep \lnsh$}
\RightLabel{$\untilli$}
\BinaryInfC{$\lnsg \sep \Gamma, A \until B \sar \Delta \sep \Sigma \sar \Pi \sep \lnsh$}
\DisplayProof
\end{tabular}
\end{center}


\begin{center}
\begin{tabular}{c c}
\AxiomC{$\lnsg \sep \Gamma \sar A, B, \Delta \sep \Sigma \sar \Pi \sep \lnsh$}
\AxiomC{$\lnsg \sep \Gamma \sar B, \Delta \sep \Sigma \sar A \until B, \Pi \sep \lnsh$}
\RightLabel{$\untilri$}
\BinaryInfC{$\lnsg \sep \Gamma \sar A \until B, \Delta \sep \Sigma \sar \Pi \sep \lnsh$}
\DisplayProof
\end{tabular}
\end{center}


\begin{center}
\begin{tabular}{c c}
\AxiomC{$\lnsg \sep \Gamma, B \sar \Delta$}
\AxiomC{$\lnsg \sep \Gamma, A \sar \Delta \sep A \until B \sar $}
\RightLabel{$\untillii$}
\BinaryInfC{$\lnsg \sep \Gamma, A \until B \sar \Delta$}
\DisplayProof

&

\AxiomC{$\lnsg \sep \Gamma \sar A, B, \Delta$}
\AxiomC{$\lnsg \sep \Gamma \sar B, \Delta \sep \sar A \until B$}
\RightLabel{$\untilrii$}
\BinaryInfC{$\lnsg \sep \Gamma \sar A \until B, \Delta$}
\DisplayProof
\end{tabular}
\end{center}

\noindent\hrule
\caption{The set of rules $\calc$.\label{fig:calc}}
\end{figure*}

The set of rules $\calc$ we use in our proof systems are shown in \fig~\ref{fig:calc}. The $\id$ and $\botl$ rules are \emph{initial rules} and we call the conclusion of an initial rule an \emph{initial sequent}. The remaining rules are \emph{logical rules} that introduce complex formulae into either the antecedent or consequent of a component. By means of the formula interpretation, one can readily verify that if the conclusion of a rule is invalid, then at least one premise is invalid, i.e., each rule in $\calc$ is locally sound (see Theorem~\ref{thm:soundness}).

We define the \emph{principal formula} of a logical rule to be the formula explicitly displayed in the conclusion, and we define the \emph{auxiliary formulae} to be those explicitly displayed in the premises. For example, $A \until B$ is principal in $\untilri$ and $A$, $B$, and $A \until B$ are auxiliary. All other formulae are called \emph{side formulae}. The \emph{principal component} (\emph{auxiliary components}) is (are) the component(s) where the principal (auxiliary, resp.) formulae occur. 


We define a \emph{derivation} $\der$ of an LNS $\lnsg$ to be a (potentially infinite) tree of LNSs such that (1) $\lnsg$ is the root and (2) every parent node is the conclusion of a rule with its children the corresponding premises. An \emph{index set} is a set $\idxset \subseteq \mathbb{N}$ such that (1) $0 \in \idxset$ and (2) if $n+1 \in \idxset$, then $n \in \idxset$. We use index sets to define certain kinds of (potentially infinite) sequences below. A \emph{path} in a derivation $\der$ is a (potentially infinite) sequence $\lnspath = \set{\lnsg_{i}}_{i \in \idxset}$ of LNSs such that for each $i$ the LNS $\lnsg_{i}$ is the parent of $\lnsg_{i+1}$ (if it exists). A \emph{branch} in a derivation $\der$ is a maximal path $\branch = \set{\lnsg_{i}}_{i \in \idxset}$ of LNSs such that $\lnsg_{0}$ is the root. The \emph{height} of a derivation is defined in the usual way as the maximal length of a branch in the derivation, which may be infinite.\\

\noindent
\textbf{Non-Wellfounded Proofs.} We define a \emph{trace value} to be a pair of the form $\tri = \bracket{A \until B, i}$ such that $A \until B \in \lang$ and $i \in \mathbb{N}$. Let $\lnsg = \Gamma_{0} \sar \Delta_{0} \sep \cdots \sep \Gamma_{n} \sar \Delta_{n}$ be an LNS. A \emph{trace value of $\lnsg$} is a trace value $\tri = \bracket{A \until B, i}$ such that $A \until B \in \Gamma_i$ for $0 \leq i \leq n$. Let $\ru \in \calc$ be a rule with $\lnsg$ 
the conclusion and $\lnsh$ 
a premise. We define $(\bracket{A \until B, i},\bracket{A \until B, j})$ to be a \emph{trace pair} for $(\lnsg,\lnsh)$ \iffi $\bracket{A \until B, i}$ is a trace value of $\lnsg$, $ \bracket{A \until B, j}$ is a trace value of $\lnsh$ 
and the following conditions hold:
\begin{itemize}

\item[$(1)$] If $A \until B$ is not principal in $\ru$, then $i = j$;

\item[$(2)$] If $A \until B$ is principal in $\ru$, then $j = i+1$ and $\lnsh$ is the right premise of $\ru$.

\end{itemize}
A trace pair $(\bracket{A \until B, i},\bracket{A \until B, j})$ is \emph{progressing} \iffi $j = i+1$. Let $\der$ be a derivation containing a path $\lnspath = \set{\lnsg_{i}}_{i \in \idxset}$. A \emph{trace} along $\lnspath$ is a sequence of trace values $\set{\tau_{i}}_{i \in \idxset}$ such that 
$(\tau_i, \tau_{i+1})$ is a trace pair for $(\lnsg_{i},\lnsg_{i{+}1})$. A trace along an infinite path $\lnspath$ is \emph{progressing} \iffi there are  infinitely many $i \in \mathbb{N}$ such that $(\tau_i, \tau_{i+1})$ is progressing, and an infinite path is \emph{progressing} \iffi it contains a progressing trace. We call $\lnsg_{n{+}1}$ a \emph{progress point} in $\lnspath$ \iffi $(\bracket{A \until B, i},\bracket{A \until B, i+1})$ is a progressing trace pair for $(\lnsg_{n},\lnsg_{n{+}1})$.

We define a derivation $\der$ to be a \emph{non-wellfounded proof} (or, \emph{$\calci$-proof}) \iffi (1) every leaf of $\der$ is an initial sequent and (2) every infinite branch has a suffix that is a progressing path. We let $\calci$ be the non-wellfounded LNS calculus obtained by letting the set of provable sequents be determined by non-wellfounded proofs.\\

\noindent
\textbf{Cyclic Proofs.} Let $\lnsg$ be an LNS with $\size{\lnsg} = n+1$ and $\lnsg(i) = \Gamma_{i} \sar \Delta_{i}$. For $0 \leq i \leq n$, we define $\lnsg$ to be \emph{$i$-saturated} \iffi it is not an initial sequent and $\set{A \mid A \in \Gamma_{i} \cup \Delta_{i}} \cap \set{\X A, A \imp B, A \until B \mid A, B \in \lang} = \emptyset$. We call $\lnsg$ \emph{saturated} \iffi for each $0 \leq i < n$, $\lnsg$  is $i$-saturated. Intuitively, saturated sequents serve as checkpoints: since their interior components are free of complex formulae (meaning, no rules are bottom-up applicable to these components), two saturated sequents with matching end components represent identical `states' in a proof, making them natural candidates for the endpoints of a cycle. This serves as the motivation for our notion of cyclic proof given below.

We define a tuple $\tuple{\der,L,c}$ to be a \emph{cyclic proof} (or, \emph{$\calccyc$-proof}) \iffi it satisfies the following:
\begin{itemize}

\item[$(1)$] $\der$ is a finite derivation;

\item[$(2)$] $L$ is a set containing only leaves of $\der$, which we call \emph{cyclic leaves};

\item[$(3)$] $c$ is a function mapping each $\lnsg \in L$ to an LNS $c(\lnsg)$ strictly below $\lnsg$, called the \emph{companion} of $\lnsg$, such that $\size{c(\lnsg)} < \size{\lnsg}$, the path from $c(\lnsg)$ to $\lnsg$ goes through the right premise of a rule $\ru \in \set{\untilli,\untillii}$, both $c(\lnsg)$ and $\lnsg$ are saturated, and $\last{c(\lnsg)} = \last{\lnsg}$;\footnote{By ``\emph{strictly below $\lnsg$},'' we mean that $c(\lnsg)$ occurs on the path from the root of $\der$ to the leaf $\lnsg$ and is distinct from $\lnsg$.}


\item[$(4)$] every other leaf in $\der$ that is not in $L$ is an initial sequent.

\end{itemize}
We let $\calccyc$ be the cyclic LNS calculus obtained by letting the set of provable sequents be determined by cyclic proofs. If an LNS $\lnsg$ has a non-wellfounded proof in $\calci$ or a cyclic proof in $\calccyc$, then we write $\calci \proves \lnsg$ and $\calccyc \proves \lnsg$, respectively.

\begin{example} An example of a cyclic proof $(\der, L, c)$ is displayed in \fig~\ref{fig:example-cyc-proof} where $L = \set{\lnsg}$ and $c(\lnsg) = \lnsh$ such that $\lnsg :=  p \sar q \sep p \until q \sar p \until q$ and $\lnsh :=  p \until q \sar p \until q$. It is straightforward to verify that $c$ defines a correct cycle since $\size{\lnsh} < \size{\lnsg}$, the branch from $\lnsh$ to $\lnsg$ goes through the right premise of $\untillii$, both $\lnsg$ and $\lnsh$ are saturated, and $\last{\lnsh} = \last{\lnsg}$.
\end{example}

\begin{figure}

\begin{center}
\begin{tikzpicture}
\node[] (a) [] {\AxiomC{}
\RightLabel{$\id$}
\UnaryInfC{$q \sar p, q$}
\AxiomC{}
\RightLabel{$\id$}
\UnaryInfC{$ q \sar q \sep \sar p \until q$}
\RightLabel{$\untilrii$}
\BinaryInfC{$ q \sar p \until q$}

\AxiomC{}
\RightLabel{$\id$}
\UnaryInfC{$p \sar p, q \sep p \until q \sar$}

\AxiomC{$ p \sar q \sep p \until q \sar p \until q$}
\RightLabel{$\untilri$}
\BinaryInfC{$p \sar p \until q \sep p \until q \sar$}
\RightLabel{$\untillii$}
\BinaryInfC{$p \until q \sar p \until q$}
\DisplayProof};


\draw[->,rounded corners=.25cm,dashed]
  (5.1,.75) -- (5.1,1) -- (7,1) -- node[right,midway]{$c$} 
  (7,-.65) -- (.6,-.65);

\end{tikzpicture}
\end{center}

\caption{Example of a cyclic proof.\label{fig:example-cyc-proof}}
\end{figure}

\section{Soundness and Completeness of $\calci$}\label{sec:sound-compl}

In this section, we prove the soundness and completeness of $\calci$. In the subsequent section, we will define mutual proof transformations between non-wellfounded and cyclic proofs, yielding soundness and completeness of  $\calccyc$ as a corollary. \smallskip

\noindent \textbf{Soundness.} Due to the presence of infinite branches in non-wellfounded proofs, soundness cannot be proven by a simple induction on the height of proofs, but rather requires a more sophisticated argument. We argue by contradiction and assume that some invalid LNS $\lnsg$ has a non-wellfounded proof $\pi$. Since $\lnsg$ is invalid, there exists a state sequence $\model$ such that $\model, 0 \not \sat \lnsg$. We then show how to obtain a branch $\branch =\{\lnsg_i\}_{i\in \mathbb{N}}$ of $\pi$ such that $\model, 0 \not\sat \lnsg_i$ for each $i \in \mathbb{N}$, by showing that all rules of $\calci$ are locally sound, i.e., that if the conclusion is invalid, then at least some premise is invalid. Note that $\branch$ must be an infinite branch of $\pi$, since initial sequents are always valid. Simultaneously, we assign to each trace value occurring in an LNS along $\branch$ a well-founded measure and show that for any trace pair $(\bracket{A \until B, j}, \bracket{A \until B, k})$ occurring in $\branch$ the measure strictly decreases when the trace pair is progressing and weakly decreases otherwise. Finally, since $\pi$ is a proof, $\branch$ must contain a suffix with a progressing trace. By construction, the measure along this trace never increases and strictly decreases infinitely often, which contradicts the fact that the measure is well-founded. This gives the desired contradiction and thus soundness of $\calci$.

Let $\lnsg$ be an LNS. A \emph{progress measure} for $\lnsg$ is a map $\mu$ which assigns to each trace value $\bracket{A \until B, j}$ of $\lnsg$ a natural number.

\begin{theorem}[Soundness of $\calci$]\label{thm:soundness}
    If $\calci \proves \lnsg$, then $\lnsg$ is valid.
\end{theorem}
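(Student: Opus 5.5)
We argue by contradiction, following the outline above. Suppose $\pi$ is a non-wellfounded proof of an invalid LNS $\lnsg$, and fix a state sequence $\model$ with $\model,0 \not\sat \lnsg$. By Lemma~\ref{lem: formula interpretation}, this means that for every component $i$ of $\lnsg$ we have $\model, i \sat \bigwedge\Gamma_i$ and $\model, i \not\sat \bigvee \Delta_i$. Call an LNS \emph{falsified} if it has this property with respect to $\model$ (always reading component $i$ at time point $i$). The measure on trace values is
\[
\mu(\bracket{A \until B, j}) := \min\{k \geq j \mid \model, k \sat B\} - j.
\]
This is well defined whenever $A \until B \in \Gamma_j$ in a falsified LNS, because then $\model, j \sat A \until B$.

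\textbf{Step 1 (local soundness with a controlled measure).} For every rule instance whose conclusion is falsified, we select a premise that is also falsified, and we check that trace values never increase in measure along this choice. For $\id$ and $\botl$ there is nothing to do, since their conclusions cannot be falsified. The propositional rules and the $\X$-rules are routine. For $\nextlii$ and $\nextrii$, note that the new end component is read at time $n+1$, and $\model, n+1$ falsifies it because $\model, n \sat \X A$ (respectively $\not\sat \X A$).

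The key cases are $\untilli$ and $\untillii$ with principal $A \until B$ at component $i$. Since $\model, i \sat A \until B$, let $m$ be its measure.
\begin{itemize}
\item If $m = 0$, then $\model, i \sat B$ and we pick the left premise.
\item If $m > 0$, then $\model, i \not\sat B$, and so $\model, i \sat A$ and $\model, i+1 \sat A \until B$. We pick the right premise.
\end{itemize}
In the second case the new trace value $\bracket{A \until B, i+1}$ has measure exactly $m-1$. For $\untilri$ and $\untilrii$, the falsity of $A \until B$ at $i$ gives $\model, i \not\sat B$, and either $\model, i \not\sat A$ (take the left premise) or $\model, i+1 \not\sat A \until B$ (take the right premise).

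Non-principal trace values keep the same formula and the same index. Hence their measure is unchanged, since $\mu$ depends only on $\model$, the formula and the index.

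\textbf{Step 2 (the branch and the contradiction).} Starting at the root and repeatedly applying Step 1 yields a branch $\branch$ of falsified LNSs. It cannot end in a leaf, since leaves are initial sequents and therefore valid, so $\branch$ is infinite. As $\pi$ is a proof, some suffix of $\branch$ carries a progressing trace $\{\tau_k\}$. By condition (2) in the definition of trace pair, every progressing pair arises from the right premise of $\untilli$ or $\untillii$ with principal formula $A \until B$. Our choice of that right premise happened only when the measure was positive, and it lowered the measure by one. Non-progressing pairs keep the measure fixed. So $\mu(\tau_k)$ is non-increasing and strictly decreases infinitely often, which is impossible in $\mathbb{N}$. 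Hence $\lnsg$ is valid.

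\textbf{Main obstacle.} The delicate point is the interaction between the trace condition and the premise choice in the $\until$-left rules. A progressing pair requires the right premise, but it must also be certain that the measure strictly drops there. This is exactly why the left premise is chosen when $m=0$, and why the measure is defined relative to the index $j$, so that the right premise always moves to the successor time point.
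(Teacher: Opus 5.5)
Your proof is correct and follows essentially the same route as the paper. You fix a falsifying state sequence and use it to choose premises, taking the left premise of $\untilli$/$\untillii$ exactly when the measure is $0$ and the right premise otherwise; the distance to the first $B$-witness is then a measure that never increases along traces and drops at every progress point. Your measure is defined globally from $\model$, the formula and the index, and it coincides with the paper's per-sequent measure fixed by its invariant (3)(a), since the first time $B$ holds is automatically a witness for $A \until B$. So your version is a slightly streamlined form of the paper's argument.
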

\begin{proof}
    Let $\pi$ be a non-wellfounded proof of $\lnsg$ and suppose, for contradiction, that $\lnsg$ is invalid. Then, there exists a state sequence $\model$ such that $\model, 0 \not \sat \lnsg$. We will inductively define an infinite path of LNSs $\set{\lnsg_i}_{i \in \mathbb{N}}$ through $\pi$ and an infinite sequence of progress measures $\set{\mu_i}_{i \in \mathbb{N}}$ such that for all $i \in \mathbb{N}$ the following hold:
    \begin{enumerate}
        \item[$(1)$]$\model, 0 \not \sat \lnsg_i$;
        \item[$(2)$] $\mu_i$ is a progress measure for $\lnsg_i$;
        \item[$(3)$] for every trace value $\bracket{A \until B,j}$ of $\lnsg_i$ the following hold:
        \begin{enumerate}
            \item $\mu_i(\bracket{A \until B,j})$ is the least natural number $k$ such that $\model, j+k \sat B$ and $\model, j+k' \sat A$ for each $0 \leq k' < k$;
            \item if $(\bracket{A \until B,j}, \bracket{A \until B,j})$ is a trace pair for $(\lnsg_{i-1}, \lnsg_{i})$, then $\mu_{i}(\bracket{A \until B,j}) \leq \mu_{i-1}(\bracket{A \until B,j})$;
            \item if $(\bracket{A \until B,j}, \bracket{A \until B,j+1})$ is a trace pair for $(\lnsg_{i-1}, \lnsg_{i})$, then $\mu_{i}(\bracket{A \until B,j+1}) < \mu_{i-1}(\bracket{A \until B,j})$.
        \end{enumerate}
    \end{enumerate}
We now define $\set{\lnsg_i}_{i \in \mathbb{N}}$ and $\set{\mu_i}_{i \in \mathbb{N}}$. First, let $\lnsg_0 := \lnsg$. Since $\model, 0 \not \sat \lnsg$, we have $\model, j \sat A \until B$ for each trace value $\bracket{A \until B, j}$ of $\lnsg$ by Lemma~\ref{lem: formula interpretation}. Therefore, for any trace value $\bracket{A \until B, j}$ of $\lnsg$, there exists a natural number $k$ such that (i) $\model, j + k \sat B$, and (ii) $\model, j + k' \sat A$ for each $0 \leq k' < k$. Hence let $\mu_0$ be the map that assigns to each trace value $\bracket{A \until B, j}$ of $\lnsg$ the least natural number $k$ such that (i) and (ii) hold. One can readily verify that conditions (1)--(3) are satisfied.

Suppose we have defined $\lnsg_i$ and $\mu_i$ and they satisfy (1)--(3). We define $\lnsg_{i+1}$ and $\mu_{i+1}$ by using a case distinction based on the rule $\mathsf{r}$ with conclusion $\lnsg_i$ in $\pi$. Note that $\mathsf{r}$ cannot be an initial rule, since $\model, 0 \not \sat \lnsg_i$ by (1). We show the case for $\untilli$; the other cases are argued similarly. Suppose $\lnsg_i$ is the conclusion of an instance of $\untilli$, as shown below.
    \begin{center}
\begin{tabular}{c}
\AxiomC{$\overbrace{\lnsg \sep \Gamma, B \sar \Delta \sep \Sigma \sar \Pi \sep \lnsh}^{\lnsg'}$}
\AxiomC{$\overbrace{\lnsg \sep \Gamma, A \sar \Delta \sep \Sigma, A \until B  \sar \Pi \sep \lnsh}^{\lnsg''}$}
\RightLabel{$\untilli$}
\BinaryInfC{$\underbrace{\lnsg \sep \Gamma, A \until B \sar \Delta \sep \Sigma \sar \Pi \sep \lnsh}_{\lnsg_i}$}
\DisplayProof
\end{tabular}
\end{center}

Suppose $\Gamma, A \until B \sar \Delta$ is the $j$-component of $\lnsg_i$ and the principal formula forms the trace value $\bracket{A \until B, j}$ of $\lnsg_i$. We now make a case distinction on if $\mu_i(\bracket{A \until B, j}) = 0$ or $\mu_i(\bracket{A \until B, j}) > 0$.

If $\mu_i(\bracket{A \until B, j}) = 0$, then $\model, j \sat B$. Hence, let $\lnsg_{i+1} := \lnsg'$ and observe that $\model, 0 \not \sat \lnsg_{i+1}$. If $B=C \until D$ for some formulae $C,D$, then let $\mu_{i+1}(\bracket{B,j})$ be the least natural number $k$ such that $\model, j+k \sat D$ and $\model, j+k' \sat C$ for each $0 \leq k' < k$. For any other trace value $\bracket{C \until D, l}$ of $\lnsg_{i+1}$, $(\bracket{C \until D, l}, \bracket{C \until D, l})$ is a trace pair for $(\lnsg_i, \lnsg_{i+1})$. Therefore, let $\mu_{i+1}(\bracket{C \until D, l}) := \mu_i(\bracket{C \until D, l})$. Note that (1)--(3) are satisfied.

If $\mu_i(\bracket{A \until B, j}) > 0$, then observe that $\model, j \sat A$ and $\model, j+1 \sat A \until B$ by (3)-(a). Hence, let $\lnsg_{i+1} := \lnsg''$ and note that $\model, 0 \not \sat \lnsg_{i+1}$. If $A=C \until D$ for some formulae $C,D$, then let $\mu_{i+1}(\bracket{A,j})$ be the least natural number $k$ which satisfies that $\model, j+k \sat D$ and $\model, j+k' \sat C$ for each $0 \leq k' < k$. Furthermore, let $\mu_{i+1}(\bracket{A \until B, j+1}) := \mu_i(\bracket{A \until B, j}) - 1$. For any other trace value $\bracket{E \until F, l}$ of $\lnsg_{i+1}$, $(\bracket{E \until F, l}, \bracket{E \until F, l})$ is a trace pair for $(\lnsg_i, \lnsg_{i+1})$. Therefore, let $\mu_{i+1}(\bracket{E \until F, l}) := \mu_i(\bracket{E \until F, l})$. Note that (1)--(3) are satisfied. This concludes the construction of $\set{\lnsg_i}_{i \in \mathbb{N}}$ and of $\set{\mu_i}_{i \in \mathbb{N}}$.

Since $\pi$ is a proof, the infinite branch $\set{\lnsg_i}_{i \in \mathbb{N}}$ must have a suffix $\lnspath= \set{\lnsg_i}_{i \geq j}$ that is a progressing path. Hence, there exists a progressing trace $\set{\tau_i}_{i \geq j}$ along $\lnspath$. Consider the infinite sequence of natural numbers $\set{\mu_i(\tau_i)}_{i\geq j}$. By (3)-(b) and (3)-(c) we have that for each $i \geq j$, $\mu_{i+1}(\tau_{i+1}) \leq \mu_i(\tau_i)$. Since the trace is progressing, there are infinitely many progressing trace pairs $(\tau_i, \tau_{i+1})$, so (3)-(c) implies that there are infinitely many $i \geq j$ with $\mu_{i+1}(\tau_{i+1}) < \mu_i(\tau_i)$. Therefore, $\set{\mu_i(\tau_i)}_{i\geq j}$ is an infinite decreasing sequence of natural numbers that strictly decreases infinitely often. This contradicts the well-foundedness of the natural numbers. Hence, $\lnsg$ must be valid. 
\end{proof}

\noindent \textbf{Completeness.} Completeness is established by a proof-search argument. Given an LNS $\lnsg$, we show how to build a derivation $\pi$ such that either $\pi$ is a non-wellfounded proof of $\lnsg$ or we can construct a counter-model of $\lnsg$ from a `bad' branch of $\pi$. For the purposes of translating non-wellfounded into cyclic proofs, we want to obtain completeness with respect to proofs that satisfy a certain saturation principle. Namely, a derivation $\pi$ satisfies the \emph{saturation recurrence property} (SRP) if and only if, for any infinite branch $\branch := \{\lnsg_i\}_{i \in \mathbb{N}}$ in $\pi$, infinitely many $\lnsg_i$ are saturated. In the following we will thus show that if an LNS $\lnsg$ is valid, then it has an $\calci$-proof which satisfies the SRP. 

Since we are working with sequents based on sets of formulae (where contraction is implicit), rule applications can be either preserving or succinct. A rule instance is called \emph{preserving} if the principal formula also occurs as a side formula in the premise, and \emph{succinct} otherwise. For example, the following shows a preserving application of $\impr$ on the left and a succinct application on the right:
\begin{center}
\begin{tabular}{c c}
\AxiomC{$\lnsg \sep \Gamma, A \sar  B, A \imp B, \Delta \sep \lnsh$}
\RightLabel{$\impr$}
\UnaryInfC{$\lnsg \sep \Gamma \sar A \imp B, \Delta \sep \lnsh$}
\DisplayProof

&

\AxiomC{$\lnsg \sep \Gamma, A \sar  B, \Delta \sep \lnsh$}
\RightLabel{$\impr$}
\UnaryInfC{$\lnsg \sep \Gamma \sar A \imp B, \Delta \sep \lnsh$}
\DisplayProof
\end{tabular}
\end{center}
In the following, we call $\nextlii$, $\nextrii$, $\untillii$ and $\untilrii$ \emph{expansion} rules and all other logical rules \emph{length-preserving}. An LNS $\lnsg$ is called \emph{fully saturated} \iffi for $0 \leq i < \size{\lnsg}$, $\lnsg$ is $i$-saturated, i.e., $\lnsg$ is not the conclusion of any instance of a rule in $\calci$. We formalize proof-search by means of \emph{proof-search trees}: derivations that represent a systematic search for a proof of a given LNS.

\begin{definition} A \emph{proof-search tree} for an LNS $\lnsg$ is a derivation $\pi$ of $\lnsg$ such that the following hold:
\begin{enumerate}

\item[$(1)$] Every rule instance in $\pi$ is succinct;

\item[$(2)$] The conclusion of every instance of an expansion rule is saturated;

\item[$(3)$] Every leaf of $\pi$ is an initial sequent or a fully saturated sequent.

\end{enumerate}
\end{definition}

Intuitively, given an LNS $\lnsg$, we obtain a proof-search tree by applying length-preserving rules bottom-up until every leaf is either an initial sequent or saturated. Branches ending in initial sequents or fully saturated leaves are closed, while branches ending in saturated leaves that are \emph{not fully saturated} are extended by bottom-up applying an expansion rule. Since the premise of an expansion rule may no longer be saturated, this process can be repeated. We thus obtain the following lemma, whose proof is standard and omitted.

\begin{lemma}
    Every LNS $\lnsg$ has a proof-search tree.
\end{lemma}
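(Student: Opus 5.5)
We construct the proof-search tree for $\lnsg$ top-down, i.e.\ by building a (possibly infinite) derivation bottom-up in stages, alternating between a \emph{saturation phase} using only length-preserving rules and an \emph{expansion step} using one expansion rule. The tree is then the limit (union) of these finite stages. Since the rules are applied bottom-up and each stage only extends open leaves, the limit is a well-defined derivation. It remains to check conditions (1)--(3) of the definition of a proof-search tree.

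\emph{Saturation phase.} Given an open leaf $\lnsh$ that is neither initial nor saturated, we repeatedly apply succinct instances of length-preserving rules ($\impl,\impr,\nextli,\nextri,\untilli,\untilri$) bottom-up to complex formulae in interior components, stopping on any branch as soon as an initial sequent appears. We must show that this phase terminates on every branch. Consider the interior components $0,\dots,n-1$ and the measure $\sum_{i<n} \ell(\Gamma_i\cup\Delta_i)$ weighted lexicographically by position. A succinct rule application at component $i$ removes the principal formula from component $i$ and adds strictly shorter formulae to component $i$. For $\nextli$, $\nextri$, $\untilli$ and $\untilri$, it may also add formulae, possibly $A\until B$ itself, to component $i+1$. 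Because sets absorb duplicates and formulae only move rightward, the multiset ordering on the vector $(\ell_0,\dots,\ell_{n-1})$, compared from the left, strictly decreases. Formulae pushed into the end component $n$ do not count toward the measure. By well-foundedness of the lexicographic order, every branch of the phase is finite, so by K\"onig's lemma the phase yields a finite derivation. Its leaves are initial or saturated. Condition (1) holds because every application is succinct.

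\emph{Expansion step.} Let $\lnsh$ be a saturated, non-initial leaf. If $\lnsh$ is fully saturated (its end component contains no complex formula), we leave it as a leaf, as condition (3) permits. Otherwise, the end component contains a complex formula. If that formula is an implication, we treat it within the saturation phase, since $\impl$ and $\impr$ are length-preserving and apply to the end component as well. To keep the order clean, the saturation phase should act on all components, including the end component, for the rules that do not require a successor component. If the end component still contains some $\X A$ or $A\until B$ with no successor component, we apply the corresponding expansion rule ($\nextlii$, $\nextrii$, $\untillii$ or $\untilrii$) succinctly. Its conclusion is saturated, which gives condition (2). The premise has a new end component, and the old end component becomes interior. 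We then return to the saturation phase.

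\emph{Main obstacle.} The delicate step is the termination argument together with the interplay at the end component. Including the end component in the measure fails, because $\untilli$ can re-add $A\until B$ to the next component. The resolution is to make the saturation phase act on components strictly before the end component, plus implications in the end component, and to give the lexicographic left-to-right measure on interior components. Leaves of the limit tree are then exactly the initial sequents and the fully saturated sequents, so condition (3) holds. Infinite branches are allowed, since a proof-search tree need not be finite.
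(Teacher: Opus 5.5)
Your construction is the one the paper has in mind. The paper omits the proof as standard, but its sketch is exactly yours: saturate with succinct length-preserving rules, expand at saturated but not fully saturated leaves, and repeat. Your check of conditions (1)--(3) is fine.

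The flaw is in the termination argument for the saturation phase. You explicitly let that phase apply $\impl$ and $\impr$ to implications in the end component. But your measure $(\ell_0,\ldots,\ell_{n-1})$ ignores the end component, so these steps leave it unchanged, and your well-foundedness argument does not exclude an infinite phase. Your reason for dropping the end component is also mistaken. Take $\ell_i := \ell(\Gamma_i)+\ell(\Delta_i)$ and the left-to-right lexicographic order on the full vector $(\ell_0,\ldots,\ell_n)\in\mathbb{N}^{n+1}$, whose length is fixed during a phase. A succinct $\nextli$, $\nextri$, $\untilli$ or $\untilri$ at a component $i<n$ strictly lowers $\ell_i$ and can only raise the \emph{later} coordinate $\ell_{i+1}$, which the lexicographic order tolerates. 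Meanwhile $\impl$ and $\impr$ at the end component lower $\ell_n$ and change nothing else. So keeping $\ell_n$ as the last coordinate repairs the argument.

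There are two smaller slips. First, use $\ell(\Gamma_i)+\ell(\Delta_i)$ rather than the length of the union $\Gamma_i\cup\Delta_i$; the latter can grow when the principal formula also occurs on the other side of its component. Second, the order that works is the lexicographic one, not the multiset ordering you also mention, since under the multiset ordering the right premise of $\untilli$ can be larger. Alternatively, the paper's measure $\langle d(\lnsg),s(\lnsg)\rangle$ from Lemma~\ref{lem: measure decrease} handles all succinct length-preserving steps uniformly.

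Note also that termination is not actually needed for this lemma, because a proof-search tree may have infinite branches. Consider extending every open leaf that is neither initial nor fully saturated: use a succinct length-preserving rule on some complex formula in an interior component if there is one, and otherwise a succinct rule on the end component (an expansion rule then has a saturated conclusion). The limit of this process is already a derivation satisfying (1)--(3). Termination of length-preserving segments is what the paper needs later, to obtain the SRP in Lemma~\ref{lem: SRP}.
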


Given a Gentzen sequent $\Gamma \sar \Delta$, we define the \emph{size} of $\Gamma \sar \Delta$ to be $s(\Gamma \sar \Delta) := \ell(\Gamma) + \ell(\Delta)$. For an LNS $\lnsg = \Gamma_0 \sar \Delta_0 \sep \cdots \sep \Gamma_n \sar \Delta_n$ we define the \emph{size} of $\lnsg$ to be $s(\lnsg) := \sum_{0 \leq i \leq n} s(\Gamma_i \sar \Delta_i)$. Given a Gentzen sequent $\Gamma \sar \Delta$, let $\sufo{\Gamma \sar \Delta} := \sufo{\Gamma \cup \Delta}$. For an LNS $\lnsg = \Gamma_0 \sar \Delta_0 \sep \cdots \sep \Gamma_n \sar \Delta_n$, we define $\sufo{\lnsg}$ as follows: $\sufo{\lnsg} := \bigcup_{0 \leq i \leq n} \sufo{\Gamma_i \cup \Delta_i}.$ 
A \emph{potential trace value} of $\lnsg$ is a trace value $\tau=\langle A \until B, i \rangle$ such that $A \until B \in \sufo{\Gamma_i \sar \Delta_i}$ where $\Gamma_i \sar \Delta_i$ is the $i$-component of $\lnsg$. Note that, unlike trace values of $\lnsg$, potential trace values need not occur in the antecedent of a component. Given a potential trace value $\tau =\langle A \until B, i \rangle$ of $\lnsg$, define its \emph{distance} to be $d(\tau) := \size{\lnsg}-(i+1)$. Let $\mathsf{PTV}_\lnsg$ be the \emph{multiset} of potential trace values of $\lnsg$. Define the \emph{distance} of $\lnsg$ to be $d(\lnsg) := \sum_{\tau \in \mathsf{PTV}_\lnsg} d(\tau)$. For the following lemma we consider the tuple $\langle d(\lnsg), s(\lnsg)\rangle$ where $<_l$ is the standard lexicographical order on 
$\mathbb{N} \times \mathbb{N}$. It is well-known that $<_l$ is a well-order.

 \begin{lemma}\label{lem: measure decrease} Let $\lnsg$ be an LNS, let $\mathsf{r}$ be a length-preserving rule and consider a succinct instance of $\mathsf{r}$ with conclusion $\lnsg$ and a premise $\lnsh$. Then, $\langle d(\lnsh), s(\lnsh)\rangle <_l \langle d(\lnsg), s(\lnsg)\rangle$.
 \end{lemma}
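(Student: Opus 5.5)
The plan is a case analysis over the six length-preserving rules $\impr$, $\impl$, $\nextli$, $\nextri$, $\untilli$ and $\untilri$, each taken in its succinct form and for each premise $\lnsh$. Because the rule is length-preserving, $\size{\lnsh} = \size{\lnsg}$. So a potential trace value $\bracket{C \until D, k}$ has the same distance $\size{\lnsg}-(k+1)$ whether it is computed in $\lnsg$ or in $\lnsh$, and comparing $d(\lnsh)$ with $d(\lnsg)$ reduces to comparing, component by component, the until-subformulae that are counted.

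I will read $\mathsf{PTV}_\lnsg$ as counting \emph{occurrences}: one entry for each occurrence of an until-subformula inside each formula occurrence of each component. The set-based reading does not work. Take the conclusion $\X(p \until q), p \until q \sar \sep \sar \sep \sar$ of $\nextli$; its premise $p \until q \sar \sep p \until q \sar \sep \sar$ gains $\bracket{p \until q, 1}$ at distance $1$ without losing $\bracket{p \until q, 0}$. The aim in every case is to show $d(\lnsh) \leq d(\lnsg)$, and to show $s(\lnsh) < s(\lnsg)$ whenever $d$ does not strictly drop.

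\textbf{Implication rules.} For $\impr$ and $\impl$, the premise component replaces $A \imp B$ by $A$ and/or $B$, each possibly already present and then absorbed by the set. Every until-occurrence in the premise component is an occurrence inside $A \imp B$ in the same component, so $d$ weakly decreases. Succinctness removes $A \imp B$, and $\ell(A)+\ell(B) < \ell(A \imp B)$, so $s$ strictly decreases.

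\textbf{Next rules.} For $\nextli$ and $\nextri$, succinctness deletes $\X A$ from component $i$ and adds $A$ to component $i+1$. Each until-occurrence in $A$ moves from index $i$ to index $i+1$, so its distance drops by one, unless it is absorbed by an existing copy of $A$, in which case it disappears. Since $\X A$ always contributes at least its own $\ell$ to the size, this gives $d$ non-increasing and $s$ strictly decreasing.

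\textbf{Until rules (main obstacle).} The hardest part is $\untilli$ and $\untilri$. For the left premise, $A \until B$ at $i$ is replaced by $B$ (respectively $A,B$) at $i$; this behaves like the implication case, since the occurrence of $A \until B$ itself is lost and $s$ drops. For the right premise, $A \until B$ at $i$ is replaced by $A$ at $i$ together with $A \until B$ at $i+1$. The occurrence $A \until B$ itself moves one step toward the end, but the until-occurrences inside $A$ are now counted twice, once at $i$ and once at $i+1$. I would first try to show that the duplicated contribution at $i+1$, of size $(\size{\lnsg}-i-2)$ times the number of until-occurrences in $A$, is outweighed. If that fails, I would refine the bookkeeping: either count only outermost until-occurrences, or weight subformulae by nesting depth, and then check that the refined measure still works for the other four rules.
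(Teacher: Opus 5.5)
\textbf{There is a genuine gap, and it cannot be closed, because the statement is false for the measure as defined.}

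Your treatment of $\impr$, $\impl$, $\nextli$, $\nextri$ and of the left premises of the until rules is correct. It is the same ``inspection'' argument the paper uses. Your $\nextli$ example also correctly explains why occurrences, not sets, must be counted.

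The gap is the right premise of $\untilli$ and $\untilri$, which you leave as a plan. The plan cannot be carried out, because your bookkeeping is right. Write $D := \size{\lnsg}-(i+1)$ and let $u(X)$ be the number of until-occurrences in $X$. When nothing is absorbed, the right premise of $\untilli$ changes $d$ by $u(A)(D-1)-1-u(B)$; for $\untilri$, swap $A$ and $B$. This change is positive for $u(A)=1$, $u(B)=0$, $D\geq 3$. It is zero for $D=2$, and then $s$ goes up because $A$ is added.

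Concretely, take $\lnsg := (p \until q) \until r \sar \sep \sar \sep \sar \sep \sar$. Its succinct right $\untilli$-premise is $\lnsh := p \until q \sar \sep (p \until q) \until r \sar \sep \sar \sep \sar$. Then $d(\lnsg) = 3+3 = 6$ but $d(\lnsh) = 3+2+2 = 7$. This holds under the set reading of $\mathsf{PTV}$ and under any occurrence-based multiset reading alike. So the lemma is false as written. The paper's one-line justification for this case (``since the instance is succinct, $d(\lnsh) < d(\lnsg)$'') misses exactly the duplication of the until-subformulae of $A$ that you noticed. No argument about this $\langle d, s\rangle$ can close your gap.

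Changing the measure, as in your fallback, is the only way forward, but it proves a different statement. Your first candidate also fails. Counting only outermost until-occurrences increases by $D-1$ on the example above. It also increases on the \emph{left} premise of $\untilli$ whenever $B$ has two outermost until-subformulae.

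What does work is a measure in which moving a formula one component closer to the end always outweighs any proper subformula left behind. This is all that the SRP lemma for proof-search trees needs, and that is the only place this lemma is used. Two options:
\begin{itemize}
\item Take the multiset of pairs $(\size{\lnsg}-(i+1), \ell(F))$, one per formula occurrence $F$ in the $i$-component, ordered by the Dershowitz--Manna extension of $<_l$. Every succinct length-preserving step deletes the pair of the principal formula and adds only $<_l$-smaller pairs: either a proper subformula at the same distance, or a formula at distance one less. Set-absorption only deletes further pairs.
\item Take the sum of $\ell(F)\cdot K^{\size{\lnsg}-(i+1)}$, with $K$ exceeding the length of every formula in the root. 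By analyticity this bounds every formula in the tree.
\end{itemize}
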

 \begin{proof}
  Since $\mathsf{r}$ is a length-preserving rule, note that $\size{\lnsg} = \size{\lnsh}$. If $\mathsf{r} \not \in \set{\untilli, \untilri}$, then the claim of the lemma follows immediately from inspection of the rules and the fact that the application of $\mathsf{r}$ is succinct, as in each case $d(\lnsh) \leq d(\lnsg)$ and $s(\lnsh) < s(\lnsg)$. For $\mathsf{r} = \untilli$, if $\lnsh$ is the left premise, the claim immediately follows. If $\lnsh$ is the right premise, then the size of $\lnsh$ may increase, but since the instance is succinct, we have $d(\lnsh) < d(\lnsg)$ and so the claim holds. The case for $\mathsf{r} = \untilri$ is similar.\footnote{Note that it is crucial to use the \emph{multiset} of potential trace values for this argument, since when using the \emph{set} of potential trace values, the measure might not decrease and can even increase in some situations. For example, if we use sets of potential trace values and consider an application of $\untilli$ with conclusion $\lnsg := \X (p \until q), p \until q \sar \emptyset \sep \X (p \until q) \sar \emptyset$ and right premise $\lnsh := \X (p \until q), p \sar \emptyset \sep \X (p \until q), p \until q \sar \emptyset$, then $\langle d(\lnsg), s(\lnsg) \rangle = \langle 1, 11 \rangle < \langle 1, 12 \rangle = \langle d(\lnsh), s(\lnsh) \rangle$.}
 \end{proof}

\begin{lemma}\label{lem: SRP}
    Let $\lnsg$ be an LNS and $\pi$ be a proof-search tree for $\lnsg$. Then, $\pi$ satisfies the SRP.
\end{lemma}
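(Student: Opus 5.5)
We need to show that every infinite branch of a proof-search tree contains infinitely many saturated LNSs. The strategy is to show that every infinite branch uses infinitely many expansion rules, since by clause $(2)$ of the definition of a proof-search tree the conclusion of every expansion rule is saturated.

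Fix an infinite branch $\branch = \{\lnsg_i\}_{i \in \mathbb{N}}$ of $\pi$ and suppose, for contradiction, that only finitely many of its LNSs are saturated. Then there is an index $m$ such that no $\lnsg_i$ with $i \geq m$ is saturated. Consequently no $\lnsg_i$ with $i \geq m$ is the conclusion of an expansion rule. Every node of $\branch$ is an internal node, since $\branch$ is infinite, and the only non-initial rules are logical. So every step from $\lnsg_i$ to $\lnsg_{i+1}$ with $i \geq m$ is an application of a length-preserving rule. By clause $(1)$ of the definition, each such application is succinct.

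Lemma~\ref{lem: measure decrease} then applies to each of these steps. It yields
\[ \langle d(\lnsg_{i+1}), s(\lnsg_{i+1})\rangle <_l \langle d(\lnsg_i), s(\lnsg_i)\rangle \]
for all $i \geq m$. This gives an infinite strictly descending chain in $\mathbb{N} \times \mathbb{N}$ under the lexicographic order $<_l$. That contradicts the well-foundedness of $<_l$. Hence infinitely many $\lnsg_i$ are saturated, and $\pi$ satisfies the SRP.

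The main point requiring care is the step from ``not saturated'' to ``the rule applied is length-preserving''. This is simply the contrapositive of clause $(2)$, so nothing deeper is needed. I would also check that the hypotheses of Lemma~\ref{lem: measure decrease} hold for every premise along the branch, not just for one premise, since the branch may pass through either premise of a binary rule. The lemma is stated for an arbitrary premise $\lnsh$, so this holds.
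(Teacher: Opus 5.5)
Your proposal is correct and follows essentially the same argument as the paper. The paper likewise assumes a suffix with no saturated sequents, uses clauses $(1)$ and $(2)$ of the definition of a proof-search tree to conclude that only succinct length-preserving rules occur there, and obtains an infinite strictly descending $<_l$-chain from Lemma~\ref{lem: measure decrease}.
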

\begin{proof} Suppose toward a contradiction that $\pi$ is a proof-search tree for $\lnsg$ which does not satisfy the SRP. Then, there exists an infinite branch $\branch = \set{\lnsg_i}_{i \in \mathbb{N}}$ and $j \in \mathbb{N}$ such that for all natural numbers $i \geq j$ the LNS $\lnsg_i$ is not saturated. By the definition of a proof-search tree the suffix $\branch_j=\set{\lnsg_i}_{i \geq j}$ of $\branch$ starting at $\lnsg_j$ only passes through instances of length-preserving rules. Moreover, each such instance is succinct. Therefore, by Lemma~\ref{lem: measure decrease}, $\set{\langle d(\lnsg_i), s(\lnsg_i)\rangle}_{i \geq j}$ is an infinite and strictly decreasing sequence in $<_l$, which contradicts that $<_l$ is a well-order. Hence, $\pi$ satisfies the SRP.
\end{proof}

We now show that every LNS has an $\calci$-proof which satisfies the SRP or is falsifiable, from which completeness readily follows.

\begin{lemma}\label{lem: proof or falsifiable}
    Let $\lnsg$ be an LNS. Then, $\lnsg$ has an $\calci$-proof which satisfies the SRP or there exists a state sequence $\Bar{\sigma}$ such that $\Bar{\sigma} \not \sat \lnsg$.
\end{lemma}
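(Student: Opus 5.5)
Take a proof-search tree $\pi$ for $\lnsg$, which exists by the lemma on proof-search trees; by Lemma~\ref{lem: SRP} it satisfies the SRP. If every leaf of $\pi$ is an initial sequent and every infinite branch has a progressing suffix, then $\pi$ is an $\calci$-proof with the SRP and we are done. Otherwise there is a \emph{bad} branch $\branch = \set{\lnsg_i}_{i \in \idxset}$ of one of two kinds. Either it is finite and ends in a fully saturated, non-initial leaf, or it is infinite and has no progressing suffix. From $\branch$ we build a state sequence $\model$ falsifying every $\lnsg_i$, in particular $\lnsg_0 = \lnsg$.

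\textbf{Constructing $\model$.} Along $\branch$ the components only grow. Length-preserving rules keep the length, expansion rules add one component, and succinct rules remove a principal formula from a component but never an atom. So for each position $k$ we collect all formulae that ever appear in the antecedent (resp.\ consequent) of the $k$-component of some $\lnsg_i$, giving limit sets $\Gamma^\ast_k$ and $\Delta^\ast_k$. In the finite case we extend the end component by empty states. Set $\sigma_k := \Gamma^\ast_k \cap \atm$. We then prove, by induction on formula length, the truth lemma: $A \in \Gamma^\ast_k$ implies $\model, k \sat A$, and $A \in \Delta^\ast_k$ implies $\model, k \not\sat A$. This makes every $\lnsg_i$ false at $0$ by Lemma~\ref{lem: formula interpretation}.

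\textbf{Key facts for the truth lemma.} First, $\Gamma^\ast_k \cap \Delta^\ast_k \cap \atm = \emptyset$ and $\bot \notin \Gamma^\ast_k$, since no sequent on $\branch$ is initial and atoms persist. Second, every complex formula in a component is eventually principal along $\branch$. For interior components this is because expansion rules are applied only to saturated conclusions. The end component is eventually expanded because SRP gives infinitely many saturated sequents, and a saturated non-fully-saturated leaf must be expanded. Each decomposition step then yields the local conditions: for $A \imp B$ the branch chooses a premise; for $\X A$ we get $A$ at $k+1$; for $A \until B$ on the right we get $A,B$ at $k$ and $A\until B$ at $k+1$. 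The $\until$-right and $\X$ cases then follow by induction on length plus a coinductive unfolding, which is harmless since right-$\until$ only needs to fail forever.

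\textbf{Main obstacle: left $\until$.} The hard case is $A \until B \in \Gamma^\ast_k$. If the branch always takes the right premise of $\untilli$/$\untillii$, this yields a trace $\bracket{A\until B,k}, \bracket{A\until B,k+1}, \ldots$ that progresses infinitely often. To do this I would follow the occurrence of $A \until B$ from position $k$ onward: it persists as a side formula until principal, and each time it is principal the right premise moves it to $k+1$. This is a trace from some point of $\branch$, so its suffix is a progressing path, contradicting badness. (In the finite case the branch is finite, so the unfolding cannot go on forever.) Hence eventually the left premise is taken, placing $B$ at some $k+m$ with $A$ at $k,\ldots,k+m-1$, and induction on length gives $\model,k \sat A \until B$. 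The delicate part is making the trace well defined across the set-based, succinct rules, i.e.\ checking the trace-pair conditions at each step; I expect the bookkeeping there, together with the fairness argument that each formula is eventually principal, to be the main work.
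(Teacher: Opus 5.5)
Your proposal is correct and follows essentially the same route as the paper: take a bad branch of a proof-search tree, form the limit sets $\Gamma^\ast_k,\Delta^\ast_k$, and prove a truth lemma by induction on formula length, using the SRP to make every complex formula eventually principal and the absence of a progressing suffix to force the left premise in the left-$\until$ case. Two small touch-ups: the right premise of $\untilri$/$\untilrii$ gives only $B$ (not $A,B$) at $k$, together with $A\until B$ at $k+1$; and the cleanest reason an infinite branch contains infinitely many expansion steps is Lemma~\ref{lem: measure decrease}, since a saturated sequent can still receive $\impr$/$\impl$ on its end component rather than being expanded.
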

\begin{proof}
    Let $\lnsg$ be an LNS and let $\pi$ be a proof-search tree for $\lnsg$. Recall that $\pi$ is a derivation. If $\pi$ is a proof, then $\lnsg$ has an $\calci$-proof which satisfies the SRP by Lemma~\ref{lem: SRP}. Otherwise $\pi$ contains a `bad' branch $\branch$: either $\branch$ is finite and ends in a fully saturated sequent or $\branch$ is infinite and not progressing. We only consider the latter case, as the former is argued similarly. 

    So suppose $\branch := \set{\lnsg_{i}}_{i \in \mathbb{N}}$ is an infinite branch of $\der$ which is \emph{not} progressing. We use the branch $\branch$ to construct a state sequence $\sts$ such that $\sts \not\sat \lnsg$. Let us define:
$$
\Gamma_{i} := \!\!\!\!\!\!\!\!\!\!\!\!\!\!\!\! \bigcup_{j \in \mathbb{N}, \, \lnsg_{j}(i) \, = \, (\Gamma \, \sar \, \Delta)} \!\!\!\!\!\!\!\!\!\!\!\!\!\!\!\! \Gamma
\qquad
\Delta_{i} := \!\!\!\!\!\!\!\!\!\!\!\!\!\!\!\! \bigcup_{j \in \mathbb{N}, \, \lnsg_{j}(i) \, = \, (\Gamma \, \sar \, \Delta)} \!\!\!\!\!\!\!\!\!\!\!\!\!\!\!\! \Delta
$$
We define $\sigma_{i} = \set{p \mid p \in \Gamma_{i}}$ and $\bar\sigma=\tuple{\sigma_0,\sigma_1,\ldots}$. We now prove the following by a mutual induction on the length of $A$: (1) if $A \in \Gamma_{i}$, then $\sts, i \sat A$ and (2) if $A \in \Delta_{i}$, then $\sts, i \not\sat A$. 

\case{Case for $A = p \in \Gamma_{i}$.}By definition $p \in \sigma_{i}$, therefore $\sts, i \sat p$.

\case{Case for $A = p \in \Delta_{i}$.}Suppose toward a contradiction that $p \in \Gamma_{i}$ as well. Then, there are $j_0,j_1$ with  $\lnsg_{j_0}(i) = \Sigma \sar \Pi$ and $\lnsg_{j_1}(i) = \Sigma' \sar \Pi'$, such that $p \in \Sigma$ and $p \in \Pi'$. Suppose without loss of generality that $j_0 \leq j_1$. By definition of the rules of $\calci$, note that $p \in \Sigma$ implies that $p \in \Sigma'$, since there are no rules that delete atoms. Therefore, $p \in \Sigma' \cap \Pi'$ and so $\lnsg_{j_1}$ is an instance of $\id$. By definition of a proof-search tree, the branch $\branch$ ends in $\lnsg_{j_1}$. This gives a contradiction, and so, $p \not\in \Gamma_{i}$, meaning, $\sts, i \not \sat p$.

\case{Case for $A=B \to C \in \Delta_i$.} By definition, there exists a $j$ such that $B \to C \in \Pi$ with $\lnsg_j(i) = \Sigma \sar \Pi$. Since $\pi$ satisfies the SRP, there exists $k > j$ such that $\lnsg_k$ is saturated and $\size{\lnsg_{j}} < \size{\lnsg_{k}}$. This implies that in the segment of $\branch$ between $\lnsg_j$ and $\lnsg_k$ there is an instance of ${\to}\mathsf{R}$ with $B \to C$ principal. Hence, for some $j \leq l \leq k$ with $\lnsg_l(i)=\Sigma' \sar \Pi'$ we have $B \in \Sigma'$ and $C \in \Pi'$ and so $B \in \Gamma_i$ and $C \in \Delta_i$. By IH, $\Bar{\sigma}, i \sat B$ and $\Bar{\sigma}, i \not \sat C$, implying that $\Bar{\sigma}, i \not \sat B \to C$. The case for $A=B \to C \in \Gamma_i$ is similar and omitted. 

\case{Case for $A = \X B \in \Gamma_{i}$.} By definition, there exists a $j$ such that $\X B \in \Sigma$ with $\lnsg_{j}(i) = \Sigma \sar \Pi$. Since $\pi$ satisfies the SRP, there exists a natural number $k > j$ such that $\lnsg_k$ is saturated and $\size{\lnsg_{j}} < \size{\lnsg_{k}}$. This implies that in the segment of $\branch$ between $\lnsg_j$ and $\lnsg_k$ there is an instance of the rule $\nextli$ or $\nextlii$ with $\X B$ principal. Hence, for some $j \leq l \leq k$ with $\lnsg_l(i+1) = \Sigma' \sar \Pi'$ we have $B \in \Sigma'$, and so $B \in \Gamma_{i+1}$. By IH, it follows that $\sts, i+1 \sat B$, meaning, $\sts, i \sat \X B$. The case for $A=\X B \in \Delta_i$ is similar and omitted. 

\case{Case for $A = B \until C \in \Gamma_{i}$.} By definition, there exists a $j$ such that $B \until C \in \Sigma$ with $\lnsg_{j}(i) = \Sigma \sar \Pi$. Since $\pi$ satisfies the SRP, there exists an instance of $\untilli$ or $\untillii$ with $B \until C$ principal by the same argument as before. Moreover, since $\branch$ is not progressing, there exists a natural number $k \geq 1$ such that $\untilli$ and $\untillii$ are applied $k$ times above $\lnsg_j$ in the branch $\branch$ such that (1) $B \until C$ is principal in each instance, (2) for $k-1$ applications, the branch $\branch$ goes through the right premise of $\untilli$ or $\untillii$, and (3) for the $k^\text{th}$ application, the branch $\branch$ goes through the left premise of $\untilli$ or $\untillii$. Therefore, for $0 \leq n < k-1$, we have $B \in \Gamma_{i+n}$ and $C \in \Gamma_{i+k-1}$. By IH, $\sts, i+n \sat B$ and $\sts, i+k-1 \sat C$, which implies that $\sts, i \sat B \until C$.

\case{Case for $A = B \until C \in \Delta_{i}$.} By definition, there exists a $j$ such that $B \until C \in \Pi$ with $\lnsg_{j}(i) = \Sigma \sar \Pi$. There are two cases to consider. First, it could be the case that $\untilri$ and $\untilrii$ are applied infinitely often above $\lnsg_{j}$ along $\branch$ with $B \until C$ principal and with $\branch$ always going through the right premise of $\untilri$ or $\untilrii$. In this case, we have that $C \in \Delta_{k}$ for all $k \geq i$. By IH, we have that $\sts, k \not\sat C$ for all $k \geq i$, meaning, $\sts, i \not\sat B \until C$. Second, it could be the case that $\untilri$ and $\untilrii$ are applied $k \geq 1$ times\footnote{We note that $k > 0$ since $\pi$ satisfies the SRP by the same argument as before.} above $\lnsg_{j}$ in the branch $\branch$ such that $B \until C$ is principal, the branch $\branch$ goes through the right premise of $\untilri$ or $\untilrii$ for $k-1$ applications, and the branch $\branch$ goes through the left premise of $\untilri$ or $\untilrii$ for the $k^\text{th}$ application. Therefore, for $0 \leq n < k-1$, we have $C \in \Delta_{i+n}$ and $B,C \in \Delta_{i+k-1}$. By IH, $\sts, i+n \not\sat C$, $\sts, i+k-1 \not\sat B$, and $\sts, i+k-1 \not\sat C$, which implies that $\sts, i \not\sat B \until C$.

Finally, let $\lnsg(i) = \Sigma_{i} \sar \Pi_{i}$ for $0 \leq i < \size{\lnsg}$. As $\Sigma_{i} \subseteq \Gamma_{i}$ and $\Pi_{i} \subseteq \Delta_{i}$ for $0 \leq i < \size{\lnsg}$, it follows that $\sts, i \not\sat \lnsg(i)$ for all $0 \leq i < \size{\lnsg}$. Hence, by Lemma~\ref{lem: formula interpretation} it follows that $\Bar{\sigma} \not \sat \lnsg$ as claimed.
\end{proof}

\begin{theorem}[Completeness of $\calci$]\label{thm:completeness-non-wf}
If $\lnsg$ is valid, then $\lnsg$ has an $\calci$-proof satisfying the SRP.
\end{theorem}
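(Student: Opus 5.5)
This theorem follows almost immediately from Lemma~\ref{lem: proof or falsifiable}. We argue directly: let $\lnsg$ be valid, so by Definition~\ref{def:semantics} we have $\bar\sigma \sat \lnsg$ for every state sequence $\bar\sigma$. Lemma~\ref{lem: proof or falsifiable} says that either $\lnsg$ has an $\calci$-proof satisfying the SRP, or some state sequence $\bar\sigma$ satisfies $\bar\sigma \not\sat \lnsg$. Validity rules out the second alternative, so the first holds, which is exactly the claim.

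There is no real obstacle at this level; the substance sits in Lemma~\ref{lem: proof or falsifiable}, and through it in Lemma~\ref{lem: SRP} and the existence of proof-search trees. One point to confirm is that the two notions of validity agree. Validity of an LNS is defined as validity of $\fint(\lnsg)$, while falsifiability is witnessed by $\bar\sigma, 0 \not\sat \fint(\lnsg)$, which is what $\bar\sigma \not\sat \lnsg$ abbreviates. These match, so the contradiction goes through.

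I would also record the consequence that, together with Theorem~\ref{thm:soundness}, the SRP is no restriction. An LNS is valid iff it has an $\calci$-proof iff it has an $\calci$-proof satisfying the SRP. This form is what Section~\ref{sec:correspondence} will use for cycle recognition.
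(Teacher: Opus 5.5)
Your proposal is correct and matches the paper's proof: the paper also combines Lemma~\ref{lem: proof or falsifiable} with validity to rule out the falsifiable alternative. The paper phrases this via a fixed proof-search tree $\pi$ that must then be the proof, whereas you invoke the existential form of the lemma directly. Your check that ``valid'' and ``not falsifiable at $0$'' coincide under the formula interpretation is accurate.
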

\begin{proof}
    Suppose $\lnsg$ is valid and let $\pi$ be a proof-search tree for $\lnsg$. By Lemma~\ref{lem: proof or falsifiable}, $\pi$ is either a proof of $\lnsg$ which satisfies the SRP or $\lnsg$ is falsifiable. Since $\lnsg$ is valid, $\lnsg$ is not falsifiable, and so $\pi$ is a proof of $\lnsg$ satisfying the SRP.
\end{proof}

\section{Correspondence between Non-Wellfounded and Cyclic Proofs}\label{sec:correspondence}

We now establish bi-directional proof transformations between non-wellfounded and cyclic proofs. We first address cycle recognition in non-wellfounded proofs. In order to turn non-wellfounded proofs into cyclic proofs, we will make use of the SRP property, which guarantees that infinite branches contain suitable `repetitions' where they can be pruned and replaced by cycles.

Transforming cyclic proofs into non-wellfounded ones is non-trivial in the LNS setting. The key idea is captured by the shifting lemma (Lemma~\ref{lem:shifting}). Given a cyclic proof $(\der,L,c)$ whose conclusion $c(\lnsg)$ is the companion of a cyclic leaf $\lnsg$, we construct a new cyclic proof of $\lnsg$ by shifting inferences forward in $(\der,L,c)$, yielding a cyclic proof $(\der',L',c')$. Iterating this construction allows one to unravel cycles ad infinitum, thereby producing a non-wellfounded proof from a cyclic one. Together, these bi-directional transformations establish the soundness and completeness of $\calccyc$.

\subsection{From Non-Wellfounded Proofs to Cyclic Proofs}

This section shows how non-wellfounded proofs which satisfy the SRP can be transformed into cyclic proofs. The argument is relatively straightforward: since every infinite branch in a non-wellfounded proof contains infinitely many saturated sequents by the SRP, a cardinality argument suffices to find suitable `repetitions' in each branch.

\begin{lemma}\label{lem: nwf branches grow}
If $\pi$ is an $\calci$-proof and $\branch$ is an infinite branch of $\pi$, then $\branch$ passes infinitely often through $\nextlii$ or $\nextrii$, or through the right premise of $\untillii$ or $\untilrii$.
\end{lemma}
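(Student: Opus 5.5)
Note first that the four rules in the statement are exactly the instances of expansion rules whose branch passes into a premise of strictly greater length. The premises of $\nextlii$ and $\nextrii$ are one component longer than the conclusion, and so are the right premises of $\untillii$ and $\untilrii$. Every other rule step keeps the length $\size{\cdot}$ fixed. This holds for the length-preserving rules by inspection, and for the left premises of $\untillii$ and $\untilrii$, which add no component. So the statement amounts to showing that along any infinite branch $\branch=\set{\lnsg_i}_{i\in\mathbb{N}}$ of an $\calci$-proof the length grows infinitely often.

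I argue by contradiction. Suppose that from some index $j$ onward $\branch$ never passes through one of the listed steps. Then $\size{\lnsg_i}=\size{\lnsg_j}=:N$ for all $i\ge j$.

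Since $\der$ is a proof, some suffix of $\branch$ carries a progressing trace $\set{\tau_i}_{i\ge m}$, and we may take $m\ge j$. Write $\tau_i=\bracket{A_i\until B_i,k_i}$. By the definition of trace pairs, the $\until$-formula is never changed along a trace, so $A_i\until B_i$ is a fixed formula $A\until B$. The index satisfies $k_{i+1}\in\set{k_i,k_i+1}$, and it increases at infinitely many $i$, because the trace is progressing. Hence $k_i\to\infty$.

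On the other hand, each $\tau_i$ is a trace value of $\lnsg_i$, so $k_i\le \size{\lnsg_i}-1=N-1$ for all $i\ge m$. This bound contradicts $k_i\to\infty$. Therefore the length must increase infinitely often along $\branch$, which means $\branch$ passes infinitely often through $\nextlii$, $\nextrii$, or the right premise of $\untillii$ or $\untilrii$.

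The only delicate step is the bookkeeping claim that these four steps are the only ones that change the length. For this I check each rule in Figure~\ref{fig:calc}: the left premises of $\untillii$ and $\untilrii$ have the same number of components as the conclusion, and all other rules act on existing components without adding new ones. Once this is verified, the rest follows directly from the definitions of trace pairs and progressing traces.
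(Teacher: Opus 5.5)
Your proof is correct and follows the same route as the paper: if a suffix never passes through the listed steps, the length is constant from some point on, and this is incompatible with the branch being progressing. You justify that last step explicitly: along a fixed-length suffix the component index of any single trace is nondecreasing and bounded by $N-1$, so it can increase only finitely often. The paper compresses this into ``finitely many progress points,'' and your formulation is the more accurate one, since preserving applications of $\untilli$ can produce infinitely many progressing trace pairs, each on a different trace, along a fixed-length branch.
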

\begin{proof}
    Suppose otherwise and let $\pi$ be an $\calci$-proof and $\branch = \set{\lnsg_i}_{i \in \mathbb{N}}$ be an infinite branch of $\pi$. Then, there exists a $j \in \mathbb{N}$ such that the suffix $\branch_j=\set{\lnsg_i}_{i \geq j}$ of $\branch$ starting at $\lnsg_j$ never passes through instances of $\nextlii$, $\nextrii$ or the right premises of $\untillii$ or $\untilrii$. Therefore, $\size{\lnsg_i} = \size{\lnsg_k}$ for all $i,k \geq j$, which immediately implies that $\branch$ only contains finitely many progress points and is thus not progressing. This contradicts our assumption that $\pi$ is a proof.
\end{proof}

\begin{lemma}\label{lem: suitable repetitions exist}
    Let $\prf$ be a non-wellfounded proof which satisfies the SRP and let $\branch$ be an infinite branch. Then, $\branch$ contains two LNS $\lnsg$ and $\lnsh$ such that $\lnsg$ occurs strictly below $\lnsh$ in $\branch$, $\size{\lnsg} < \size{\lnsh}$, the path from $\lnsg$ to $\lnsh$ goes through the right premise of $\mathsf{r} \in \set{\untilli, \untillii}$, and $\last{\lnsg} = \last{\lnsh}$.
\end{lemma}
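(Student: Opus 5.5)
Fix an infinite branch $\branch = \set{\lnsg_i}_{i \in \mathbb{N}}$ of $\prf$. The plan is a pigeonhole argument on end components, restricted to the saturated sequents that the SRP supplies, with the trace structure forcing a progress point between the two chosen sequents.

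First I bound the possible end components. Every rule of $\calc$ has the subformula property: each formula in a premise is a subformula of some formula in the conclusion. Hence every formula on $\branch$ lies in the finite set $S := \sufo{\lnsg_0}$. So the end component $\last{\lnsg_i}$ is one of at most $2^{|S|} \cdot 2^{|S|}$ Gentzen sequents $\Gamma \sar \Delta$ with $\Gamma, \Delta \subseteq S$.

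Second, I use that $\prf$ is a proof. $\branch$ has a suffix $\set{\lnsg_i}_{i \geq m}$ carrying a progressing trace $\set{\tau_i}_{i \geq m}$. By the definition of trace pairs, each progressing step $(\tau_i,\tau_{i+1})$ comes from the branch passing through the \emph{right} premise of $\untilli$ or $\untillii$, with the traced until formula principal. So there are infinitely many indices $i \geq m$ at which the step from $\lnsg_i$ to $\lnsg_{i+1}$ goes through such a right premise. The SRP gives infinitely many saturated $\lnsg_i$; saturation is in fact not needed for this lemma, but I would pick saturated ones so the result can feed directly into the cyclic-proof construction.

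Third, the length condition. No rule deletes components, so $\size{\lnsg_i}$ is non-decreasing along $\branch$. By Lemma~\ref{lem: nwf branches grow}, or directly because the trace index $j$ in $\tau_i = \bracket{A \until B, j}$ increases unboundedly while $j < \size{\lnsg_i}$, the length tends to infinity.

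Finally, I build the pair. Define an infinite sequence $k_0 < k_1 < \cdots$ of indices of saturated LNSs, each $\geq m$. Choose each $k_{t+1}$ so that between $k_t$ and $k_{t+1}$ there is a right-premise step of $\untilli$ or $\untillii$, and $\size{\lnsg_{k_{t+1}}} > \size{\lnsg_{k_t}}$. This is always possible because right-premise steps occur infinitely often and lengths are unbounded. Since there are only finitely many possible end components, pigeonhole gives $t < t'$ with $\last{\lnsg_{k_t}} = \last{\lnsg_{k_{t'}}}$. Set $\lnsg := \lnsg_{k_t}$ and $\lnsh := \lnsg_{k_{t'}}$. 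Then $\lnsg$ is strictly below $\lnsh$, $\size{\lnsg} < \size{\lnsh}$ by monotonicity of the chain, the path between them contains the required right-premise step, and their end components coincide.

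The only delicate point is the third step, confirming that lengths really grow unboundedly. I would obtain it from the trace argument, since the indices in a progressing trace strictly increase infinitely often and are bounded by the length of the LNS.
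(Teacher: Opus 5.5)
Your proof is correct and follows essentially the same route as the paper. Analyticity bounds the possible end components, the SRP supplies infinitely many saturated sequents, and the progressing trace supplies infinitely many right-premise steps of $\untilli$/$\untillii$. Lemma~\ref{lem: nwf branches grow} (or your direct trace-index argument) gives unbounded length, and pigeonhole finishes.

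Your spaced chain $k_0 < k_1 < \cdots$ makes rigorous the combination step that the paper leaves implicit. Your remark that saturation is not needed for the lemma as literally stated (only for its use in Theorem~\ref{thm:non-wf-to-cyclic}) is also accurate.
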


\begin{proof} Let $\pi$ be an $\calci$-proof of $\lnsf$ satisfying the SRP. All rules of $\calci$ are \emph{analytic}, i.e., if $\lnsk$ is the conclusion and $\lnsk'$ a premise of a rule, then $\sufo{\lnsk'} \subseteq \sufo{\lnsk}$.  Thus, there are at most $2^{2 \times |\sufo{\lnsf}|}$ many Gentzen sequents that can occur as components of an LNS in $\prf$. Since $\pi$ satisfies the SRP, every infinite branch encounters infinitely many saturated sequents. Since there are only $2^{2 \times |\sufo{\lnsf}|}$ many Gentzen sequents that can serve as an end component, every infinite branch must contain infinitely many saturated sequents with identical end components. Moreover, since $\pi$ is a proof, every infinite branch $\branch$ is progressing and thus contains infinitely many progress points, meaning that $\branch$ passes infinitely often through the right premise of $\mathsf{r} \in \set{\untilli, \untillii}$. Finally, by Lemma~\ref{lem: nwf branches grow}, for every infinite branch $\branch = \{\lnsg_i\}_{i \in \mathbb{N}}$ there are infinitely many $i \in \mathbb{N}$ such that $\size{\lnsg_i} < \size{\lnsg_{i+1}}$. Thus, every infinite branch must contain a pair of saturated sequents $\lnsg$, $\lnsh$ such that $\size{\lnsg} < \size{\lnsh}$, the path from $\lnsg$ to $\lnsh$ goes through the right premise of $\mathsf{r} \in \set{\untilli, \untillii}$, and $\last{\lnsg} = \last{\lnsh}$. 
\end{proof}

\begin{theorem}\label{thm:non-wf-to-cyclic}
Every non-wellfounded proof satisfying the SRP can be transformed into a cyclic proof.
\end{theorem}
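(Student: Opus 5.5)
The plan is to prune the non-wellfounded proof along every infinite branch at a suitable repetition, with König's lemma guaranteeing that the resulting tree is finite. Let $\prf$ be an $\calci$-proof of $\lnsf$ satisfying the SRP. For each infinite branch $\branch$ of $\prf$, Lemma~\ref{lem: suitable repetitions exist} supplies a pair of saturated LNSs $\lnsg$ strictly below $\lnsh$ on $\branch$ with four properties:
\begin{itemize}
\item[(a)] $\size{\lnsg} < \size{\lnsh}$;
\item[(b)] the path from $\lnsg$ to $\lnsh$ passes through the right premise of some $\ru \in \set{\untilli,\untillii}$;
\item[(c)] $\last{\lnsg} = \last{\lnsh}$.
\end{itemize}
Saturation is part of these requirements as well. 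The lemma's statement does not list it, but its proof picks both sequents among the infinitely many saturated ones, so I will cite it in that strengthened form. Among all such pairs on $\branch$, I fix the one whose upper node $\lnsh$ occurs at the least height, and call $\lnsh$ the \emph{cut point} of $\branch$.

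Next I define $\der$ as the subtree of $\prf$ obtained by deleting everything strictly above each cut point. I also need the cut points to be consistent across branches that share a prefix: if a node is a cut point for one branch, it should be a cut point for every infinite branch through it. To secure this, I will use the alternative definition that a node $\lnsh$ is a \emph{candidate} if some node $\lnsg$ strictly below it, on the unique path from the root, satisfies (a)--(c) together with saturation. Since this is a property of the path to $\lnsh$ alone, it does not depend on the choice of branch. I then prune at every candidate that has no candidate strictly below it. Setting $L$ to be the set of candidates kept as leaves, and $c(\lnsh)$ to be a witnessing $\lnsg$, conditions (2)--(4) of a cyclic proof hold directly. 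The leaves of $\der$ that are not in $L$ are leaves of $\prf$, and hence initial sequents.

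The main step is to show that $\der$ is finite, i.e.\ condition (1). Since every rule has at most two premises, $\der$ is finitely branching. By König's lemma it therefore suffices to show that $\der$ has no infinite branch. Any infinite branch of $\der$ would be an infinite branch of $\prf$ that never meets a candidate. But Lemma~\ref{lem: suitable repetitions exist} gives a candidate on every infinite branch of $\prf$, namely the node $\lnsh$ produced by the lemma, and the pruning stops that branch at or below it. So no infinite branch survives, and $\der$ is a finite derivation.

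The only genuine obstacle is the bookkeeping just described, namely the branch-independence of cut points and the implicit use of saturation in Lemma~\ref{lem: suitable repetitions exist}. Both are handled by the path-local notion of candidate and by appealing to the lemma's proof. This yields the cyclic proof $\tuple{\der,L,c}$ of $\lnsf$.
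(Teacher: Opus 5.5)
Your proposal is correct and follows essentially the same route as the paper. The paper also prunes each infinite branch at the lowermost saturated node that has a saturated companion below it with smaller length, equal end component, and a passage through the right premise of $\untilli$ or $\untillii$, and then combines K\H{o}nig's lemma with Lemma~\ref{lem: suitable repetitions exist} to get finiteness. Your bookkeeping remarks are accurate refinements rather than a different argument: the paper's pruning criterion already depends only on the path from the root, so branch-independence is automatic, and the paper does silently use the saturation of the pair built in the proof of Lemma~\ref{lem: suitable repetitions exist}, exactly as you point out.
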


\begin{proof} Let $\prf$ be an $\calci$-proof of $\lnsg$ which satisfies the SRP. Let $\prf_0$ be the subtree of $\prf$ obtained by pruning every infinite branch of $\prf$ at the lowermost sequent $\lnsh$ such that $\lnsh$ is saturated, there exists a saturated sequent $\lnsk$ strictly below $\lnsh$ with $\size{\lnsk} < \size{\lnsh}$, the path from $\lnsk$ to $\lnsh$ passes through the right premise of $\mathsf{r} \in \set{\untilli, \untillii}$ and $\last{\lnsk}=\last{\lnsh}$. For each such $\lnsh$ we call $\lnsk$ its companion. Since $\prf$ is finitely branching, K\H{o}nig's lemma and Lemma~\ref{lem: suitable repetitions exist} imply that $\prf_0$ is finite and therefore a finite derivation of $\lnsg$. Let $L$ be the set of leaves of $\prf_0$ 
that are not initial sequents and let $c$ be the function which maps each such leaf to its companion. Observe that every other leaf which is not contained in $L$ is also a leaf of $\prf$ and therefore an initial sequent. Hence, the tuple $(\pi_{0}, L, c)$ is a cyclic proof of $\lnsg$.
\end{proof}

We obtain the completeness of the cyclic LNS system as a corollary of Theorems~\ref{thm:completeness-non-wf} and~\ref{thm:non-wf-to-cyclic}.

\begin{corollary}[Completeness of $\calccyc$]\label{cor:cyclic-complete}
 If $\lnsg$ is valid, then $\lnsg$ has an $\calccyc$-proof.
\end{corollary}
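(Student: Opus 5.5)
This corollary follows by composing two results already established. The plan is to chain Theorem~\ref{thm:completeness-non-wf} with Theorem~\ref{thm:non-wf-to-cyclic}.

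First, suppose $\lnsg$ is valid. By Theorem~\ref{thm:completeness-non-wf}, there is an $\calci$-proof $\prf$ of $\lnsg$ that satisfies the SRP. Concretely, $\prf$ can be taken to be any proof-search tree for $\lnsg$, which exists by the lemma on proof-search trees. Lemma~\ref{lem: SRP} gives the SRP, and Lemma~\ref{lem: proof or falsifiable} together with validity rules out a bad branch.

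Second, apply Theorem~\ref{thm:non-wf-to-cyclic} to $\prf$ to obtain a tuple $(\prf_0, L, c)$. The construction prunes each infinite branch at the lowermost saturated sequent $\lnsh$ that has a saturated companion $\lnsk$ strictly below it satisfying three conditions:
\begin{itemize}
\item $\size{\lnsk} < \size{\lnsh}$;
\item the path from $\lnsk$ to $\lnsh$ passes through the right premise of $\untilli$ or $\untillii$;
\item $\last{\lnsk} = \last{\lnsh}$.
\end{itemize}
Lemma~\ref{lem: suitable repetitions exist} ensures such a pruning point exists on every infinite branch. K\H{o}nig's lemma, applied to the finitely branching tree $\prf$, then makes $\prf_0$ finite. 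The remaining leaves of $\prf_0$ are leaves of $\prf$ and hence initial sequents. So $(\prf_0, L, c)$ meets clauses (1)--(4) of the definition of a cyclic proof, and its root is still $\lnsg$. Hence $\calccyc \proves \lnsg$.

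The only point needing care is that the pruned tree keeps the same root, and that clause (3) of the cyclic-proof definition asks for exactly the conditions guaranteed by Lemma~\ref{lem: suitable repetitions exist}. In particular, both endpoints must be saturated, which the pruning construction enforces explicitly. Once this match is checked, the corollary is immediate.
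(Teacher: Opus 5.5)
Your proposal is correct and is the paper's own argument: validity yields an SRP-satisfying $\calci$-proof by Theorem~\ref{thm:completeness-non-wf}, and Theorem~\ref{thm:non-wf-to-cyclic} (pruning at the lowermost saturated repetition, then K\H{o}nig's lemma) turns it into an $\calccyc$-proof. Your extra checks, that the root is never pruned and that clause (3) of the cyclic-proof definition is met, only unpack the proof of Theorem~\ref{thm:non-wf-to-cyclic}.
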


\subsection{From Cyclic Proofs to Non-Wellfounded Proofs}\label{subsec:unravel}

In this section, we prove the soundness of cyclic proofs by means of a proof transformation. We introduce a technique for unraveling cyclic linear nested sequent proofs; in particular, we show how a cyclic proof rooted at the companion of a cycle can be transformed into a cyclic proof of the corresponding cyclic leaf. This is achieved by shifting rule applications forward from the source proof to the target proof while maintaining a suitable invariant, ensuring that the transformation can be iterated indefinitely.


Before proving this lemma, we introduce a notion of isomorphism between derivations. Let $\der$ and $\der'$ be derivations in $\calc$. A \emph{homomorphism} from $\der$ to $\der'$ is a function $h$ mapping occurrences of LNSs in $\der$ to occurrences of LNSs in $\der'$ such that whenever a rule $\ru$ occurs in $\der$ with premises $\lnsg_{1},\ldots,\lnsg_{n}$ and conclusion $\lnsg$, the same rule $\ru$ occurs in $\der'$ with premises $h(\lnsg_{1}),\ldots,h(\lnsg_{n})$ and conclusion $h(\lnsg)$. We say that $\der$ and $\der'$ are \emph{weakly isomorphic} if there exists a homomorphism $h:\der\to\der'$ whose inverse $h^{-1}$ exists and is also a homomorphism, and call $h$ a \emph{weak isomorphism}.

\begin{example}
To illustrate the unraveling procedure, we provide a concrete example. Consider the cyclic proof $(\prf,L,c)$ in Figure~\ref{fig:example-cyc-proof} whose conclusion is $c(\lnsg) = \lnsh =  p \until q \sar p \until q$. As shown in Figure~\ref{fig:example-unravel}, starting from the cyclic leaf $\lnsg$, we shift the rule applications forward by one component to obtain a proof $(\prf',L',c')$ of $\lnsg$. This procedure can then be repeated on the cyclic leaf $p \sar q \sep p \sar q \sep p \until q \sar p \until q$, and hence iterated ad infinitum to extract a non-wellfounded proof. Observe that $(\prf,L,c)$ and $(\prf',L',c')$ are weakly isomorphic.
\end{example}

\begin{figure}
\begin{center}
\begin{tabular}{c c}
$\prf_{1}' \ :=$
&
\AxiomC{}
\RightLabel{$\id$}
\UnaryInfC{$ p \sar q \sep q \sar p, q$}
\AxiomC{}
\RightLabel{$\id$}
\UnaryInfC{$ p \sar q \sep q \sar q \sep \sar p \until q$}
\RightLabel{$\untilrii$}
\BinaryInfC{$ p \sar q \sep q \sar p \until q$}
\DisplayProof
\end{tabular}
\end{center}

\begin{center}
\begin{tikzpicture}
\node[] (a) [] {\AxiomC{$\prf_{1}'$}

\AxiomC{}
\RightLabel{$\id$}
\UnaryInfC{$p \sar q \sep p \sar p, q \sep p \until q \sar$}

\AxiomC{$ p \sar q \sep p \sar q \sep p \until q \sar p \until q$}
\RightLabel{$\untilri$}
\BinaryInfC{$p \sar q \sep p \sar p \until q \sep p \until q \sar$}
\RightLabel{$\untillii$}
\BinaryInfC{$p \sar q \sep p \until q \sar p \until q$}
\DisplayProof};


\draw[->,rounded corners=.25cm,dashed]
  (2.2,.75) -- (2.2,1) -- (6.2,1) -- node[right,midway]{$c$} 
  (6.2,-.65) -- (.2,-.65);

\end{tikzpicture}
\end{center}

\caption{Shifting lemma example.\label{fig:example-unravel}}
\end{figure}

\begin{lemma}[Shifting Lemma]\label{lem:shifting}
Let $\tuple{\der,L,c}$ be a cyclic proof of $\lnsg$ with $\lnsg = c(\lnsg')$ and $\lnsg' \in L$. Then, there exists a weakly isomorphic cyclic proof $\prf'$ of $\lnsg'$.
\end{lemma}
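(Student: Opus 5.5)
The plan is to build $\prf'$ by prefixing a fixed block of components to every LNS of $\der$. Since $\lnsg = c(\lnsg')$ is saturated, $\size{\lnsg} < \size{\lnsg'}$, and $\last{\lnsg} = \last{\lnsg'}$, I write $\size{\lnsg} = n+1$ and $\size{\lnsg'} = m+1$ with $n < m$, and set $k := m - n$. Let $\lnsk := \lnsg'(0) \sep \cdots \sep \lnsg'(k-1)$ be the first $k$ components of $\lnsg'$. All of these are interior components of $\lnsg'$, so they contain no complex formulae, since $\lnsg'$ is saturated.

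The key claim is that $\lnsg'$ is obtained from $\lnsg$ by prefixing $\lnsk$ and then replacing the interior components of $\lnsg$ by those of $\lnsg'$. This holds because both are saturated with the same end component.

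For every LNS $\lnsf = \lnsf(0) \sep \cdots \sep \lnsf(r)$ occurring in $\der$, I define
$$h(\lnsf) := \lnsk \sep \lnsf'(0) \sep \cdots \sep \lnsf'(r),$$
where each $\lnsf'(i)$ is obtained from $\lnsf(i)$ by adding the atoms and $\bot$ of $\lnsg'(k+i)$ for the indices $i < n$. I then need the \emph{invariant} that every component of $\lnsf$ with index $i < n$ descends only from the complex-free $i$-component of $\lnsg$. Rules of $\calc$ never delete atoms and only act on components carrying complex formulae. Hence, by induction up $\der$, each such component evolves only by gaining atoms or $\bot$ from rules principal in earlier components.

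With this invariant, the idea is that each rule instance of $\der$ with principal component $i$ maps to the same rule with principal component $i+k$ in $h(\der)$. The contexts $\lnsg \sep \ldots \sep \lnsh$ of Figure~\ref{fig:calc} absorb the prefix. Expansion rules remain expansion rules, because $h$ preserves the end position. Initial sequents remain initial, since we only add formulae. The root maps to $h(\lnsg) = \lnsg'$, because the end components agree and the added atoms restore $\lnsg'(k+i)$.

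Cyclic leaves $\lnsf \in L$ with companion $c(\lnsf)$ map to $h(\lnsf)$ with companion $h(c(\lnsf))$. I must check the side conditions for these new cycles:
\begin{itemize}
\item lengths shift by exactly $k$, so $\size{h(c(\lnsf))} < \size{h(\lnsf)}$ still holds;
\item end components are unchanged, so $\last{h(c(\lnsf))} = \last{h(\lnsf)}$;
\item the path between companion and leaf still passes through the right premise of $\untilli$ or $\untillii$, because rule names are preserved;
\item saturation is preserved, because prefixing complex-free components and adding atoms to interior components creates no complex formulae.
\end{itemize}

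\textbf{Main obstacle.} The main obstacle is that adding atoms could turn a saturated, non-initial sequent into an initial sequent, or make $\id$ fire earlier. For cyclic leaves this matters, since saturation requires the sequent to be non-initial. My first attempt is to avoid adding anything, that is, to set $\lnsf'(i) = \lnsf(i)$ and argue directly that $\lnsg'(k+i) = \lnsg(i)$.

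If that fails, I would instead argue that any newly initial leaf or companion can simply be closed by $\id$ or $\botl$. Doing so yields a cyclic proof in which $h$ is still a weak isomorphism on the retained part, by taking the image of $h$ to be the pruned tree. I expect to need this pruning step to be handled carefully so that the weak isomorphism claim survives.
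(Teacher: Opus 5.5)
Your construction has a genuine gap at the root: the map $h$ you define does not send $\lnsg$ to $\lnsg'$. Adding the atoms of $\lnsg'(k+i)$ to $\lnsg(i)$ yields $\lnsg'(k+i)$ only when $\lnsg(i) \subseteq \lnsg'(k+i)$. Your first attempt, $\lnsg'(k+i) = \lnsg(i)$, is also false in general. Your ``key claim'' holds for any LNS of the right length with the right end component, so it gives you nothing.

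For a concrete failure, put a component $r \sar s$ in front of every LNS in Figure~\ref{fig:example-cyc-proof}. This gives a cyclic proof of $\lnsg = r \sar s \sep p \until q \sar p \until q$ with cyclic leaf $\lnsg' = r \sar s \sep p \sar q \sep p \until q \sar p \until q$. Here $n = k = 1$ and your prefix is $r \sar s$. Your $h(\lnsg)$ is $r \sar s \sep r, p \sar s, q \sep p \until q \sar p \until q$, or $r \sar s \sep r \sar s \sep p \until q \sar p \until q$ without the added atoms. Neither is $\lnsg'$. At best you obtain a cyclic proof of a strictly larger LNS, which does not give one of $\lnsg'$. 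The pruning fallback does not address this, and it would also destroy the weak isomorphism.

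The missing observation is that your invariant is too weak. The components $0,\ldots,n-1$ of $\lnsg$ do not merely gain atoms above $\lnsg$; they never change at all. A rule only modifies its principal component and the component immediately to its right. For $i<n$, both the $i$-component and the $(i-1)$-component are complex-free, so by induction up $\der$ no rule is ever principal in, or auxiliary to, any of them. Hence $\lnsg'(i) = \lnsg(i)$ for $i<n$. In other words, $\lnsg'$ is $\lnsg$ with the block $\lnsg'(n) \sep \cdots \sep \lnsg'(m-1)$ inserted just before the end component.

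The correct map, which is the paper's, inserts this block at position $n$ instead of prefixing at position $0$:
\[ h(\lnsf) := \lnsf(0) \sep \cdots \sep \lnsf(n-1) \sep \lnsg'(n) \sep \cdots \sep \lnsg'(m-1) \sep \lnsf(n) \sep \cdots \sep \lnsf(r). \]
In your own formulation, this amounts to overwriting the shifted interior components with $\lnsg'(k+i)$ instead of adding atoms to them. That is harmless exactly because those components are never active. Then:
\begin{itemize}
\item $h(\lnsg) = \lnsg'$ exactly, and no existing component is enlarged.
\item Since $\lnsg'$ is saturated, the inserted components are complex-free and non-initial, so $h(\lnsf)$ is initial (resp.\ saturated) iff $\lnsf$ is. Your ``main obstacle'' disappears.
\item Every rule of $\der$ is principal at an index $\geq n$, so it is mirrored at that index plus $k$.
\end{itemize}
Your checks of the cycle side-conditions then go through unchanged.
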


\begin{proof} Let $\lnsg = \lnsh \sslash \Gamma \sar \Delta$ and $\lnsg' = \lnsh \sslash \lnsk \sslash \Gamma \sar \Delta$ such that $\size{\lnsk} \geq 1$, $\size{\lnsg} = n$, and $\size{\lnsg'} = k$. By definition, $\lnsg$ and $\lnsg'$ are saturated. We define a weak isomorphism $\shift$ from LNSs in $\prf$ to LNSs in the output proof in a stepwise manner. We show by induction on the depth $d$ of $\lnsf$ in $\prf$ that the following invariant (I) holds, which is a conjunction of three statements: (a) $h$ is a weak isomorphism up to depth $d$, (b) $\lnsf$ is saturated \iffi $\shift(\lnsf)$ is saturated, and (c) if $i \geq n$, then $\lnsf(i) = \shift(\lnsf)(j)$ with $j = k + (i - n)$.

We remark that condition (c) ensures that components at or
beyond the index $n$ in LNSs of the input proof are preserved
via $h$ in the output proof, but shifted forward by $(k - n)$
positions. This shift reflects the fact that $\lnsg'$ extends $\lnsg$ with the additional components of $\lnsk$. Moreover, since $\lnsg$ is saturated, rule applications in the input proof are confined to $i$-components with $i \geq n$, and these are precisely the components that are relocated from the index $i$ to the index $k + (i - n)$ in the output.

\case{Base case.} We set $\shift(\lnsg) := \lnsg'$ and note that (I) holds by the shape of $\lnsg$ and $\lnsg'$.


\case{Inductive step.} We make a case distinction on the rule applied to $\lnsf$ in $\prf$. We show the $\impr$ and $\untilrii$ cases as the remaining cases are similar. As noted above, since $\lnsg$ is saturated, all rule applications occur at $i$-components with $i \geq n$.

\case{Case for $\impr$.}Suppose $\impr$ is applied bottom-up (shown below left) to the $i$-component of $\lnsf$ in $\prf$ with $d$ the depth of $\lnsf$ in $\prf$. By IH, we know that the $i$-component of $\lnsf$ is equal to the $j$-component of $\shift(\lnsf)$, and so, we may bottom-up apply $\impr$ to the $j$-component of $\shift(\lnsf)$, as shown below right.
\begin{center}
\begin{tabular}{c c}
\AxiomC{$\lnsf\set{\Sigma, A \sar B, \Pi}_{i}$}
\RightLabel{$\impr$}
\UnaryInfC{$\lnsf\set{\Sigma \sar A \imp B, \Pi}_{i}$}
\DisplayProof

&

\AxiomC{$\shift(\lnsf)\set{\Sigma, A \sar B, \Pi}_{j}$}
\RightLabel{$\impr$}
\UnaryInfC{$\shift(\lnsf)\set{\Sigma \sar A \imp B, \Pi}_{j}$}
\DisplayProof
\end{tabular}
\end{center}
For all $m \neq i$ with $m \geq n$, the $m$-component has been unaffected by the $\impr$ application, so the $m$-component of $\lnsf\set{\Sigma, A \sar B, \Pi}_{i}$ and the $k+(m-n)$-component of $\shift(\lnsf)\set{\Sigma, A \sar B, \Pi}_{j}$ are equal. Also, observe that the $i$-component of $\lnsf\set{\Sigma, A \sar B, \Pi}_{i}$ is equal to the $j$-component of $\shift(\lnsf)\set{\Sigma, A \sar B, \Pi}_{j}$. Therefore, (I) holds between both premises above.

\case{Case for $\untilrii$.} Let $\untilrii$ be applied bottom-up to the $i$-component of $\lnsf$ with $d$ the depth of $\lnsf$ in $\prf$.
\begin{center}
\AxiomC{$\lnsf\set{\Sigma \sar A, B, \Pi}_{i}$}
\AxiomC{$\lnsf\set{\Sigma \sar B, \Pi \sep \sar A \until B}_{i}$}
\RightLabel{$\untilrii$}
\BinaryInfC{$\lnsf\set{\Sigma \sar A \until B, \Pi}_{i}$}
\DisplayProof
\end{center}
By IH, we know that the $i$-component of $\lnsf$ is equal to the $j$-component of $\shift(\lnsf)$, and so, we may bottom-up apply $\untilrii$ to the $j$-component of $\shift(\lnsf)$, as shown below.
\begin{center}
\AxiomC{$\shift(\lnsf)\set{\Sigma \sar A, B, \Pi}_{j}$}
\AxiomC{$\shift(\lnsf)\set{\Sigma \sar B, \Pi \sep \sar A \until B}_{j}$}
\RightLabel{$\untilrii$}
\BinaryInfC{$\shift(\lnsf)\set{\Sigma \sar A \until B, \Pi}_{j}$}
\DisplayProof
\end{center}
By an argument similar to the $\impr$ case, one can verify that (I) holds between the premises of the above rule applications. This concludes the definition of $\shift$. 

We now argue that $\prf' := \shift(\prf)$ is a cyclic proof of $\lnsg'$. Let $(c(\lnsf),\lnsf)$ form a cycle in $\prf$. Then, $c(\lnsf)$ is of the form $\lnsi \sslash \Sigma \sar \Pi$, $\lnsf$ is of the form $\lnsi \sslash \lnsj \sslash \Sigma \sar \Pi$ with $\size{\lnsi} \geq \size{\lnsh}$, $\size{\lnsj} \geq 1$, and both LNSs are saturated. By the definition of $\shift$, we know that a path occurs from $\shift(c(\lnsf))$ to $\shift(\lnsf)$ in $\prf'$. Moreover, by the invariant (I), we know that $\last{\shift(c(\lnsf))} = \last{\shift(\lnsf)}$, $\size{\shift(c(\lnsf))} < \size{\shift(\lnsf)}$, and $\shift(c(\lnsf))$ and $\shift(\lnsf)$ are saturated. In addition, a rule $\ru \in \set{\untilli,\untillii}$ is applied along the branch from $c(\lnsf)$ to $\lnsf$, so by the definition of $\prf'$ and because $\shift$ is a weak isomorphism, $\ru$ will be applied along the branch from $\shift(c(\lnsf))$ to $\shift(\lnsf)$. Hence, $(\shift(c(\lnsf)),\shift(\lnsf))$ forms a cycle. For any other leaf of $\prf$ that is not in $L$, we know that by invariant (I) it will be an instance of $\id$ or $\botl$, so $\prf'$ is a cyclic proof of $\lnsg'$.
\end{proof}

\begin{theorem}\label{thm:cyclic-to-non-wf}
Every $\calccyc$-proof can be transformed into an $\calci$-proof satisfying the SRP.
\end{theorem}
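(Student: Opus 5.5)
We unravel the cyclic proof by repeatedly applying the Shifting Lemma (Lemma~\ref{lem:shifting}) at cyclic leaves. The unravelled tree is built as the limit of an increasing sequence of finite derivations.

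\textbf{Construction.} Let $\tuple{\der,L,c}$ be a cyclic proof of $\lnsg$. Define $\der_0 := \der$ with the cyclic leaves marked as open. Given $\der_m$, replace each open leaf $\lnsf$ as follows. $\lnsf$ is a copy, under an earlier weak isomorphism, of a cyclic leaf $\lnsg'$ of some cyclic proof $\der^\ast$ rooted at the companion $c(\lnsg')$. Take the subproof of $\der^\ast$ rooted at $c(\lnsg')$. It is itself a cyclic proof, once cyclic leaves whose companions lie below $c(\lnsg')$ are handled; see the obstacle below. Apply the Shifting Lemma to obtain a weakly isomorphic cyclic proof of $\lnsg'$ and graft it at $\lnsf$. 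Its cyclic leaves become the new open leaves. Let $\der_\infty$ be the union of the $\der_m$. Each grafting extends every open branch by at least one rule application: the path from a companion to its leaf is nonempty, since $\size{c(\lnsg')}<\size{\lnsg'}$. Hence $\der_\infty$ is a well-defined derivation of $\lnsg$. Its leaves are exactly the non-cyclic leaves of the grafted copies, and by invariant (I) of the shifting construction these are initial sequents.

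\textbf{Properties.} For the progress condition, take any infinite branch $\branch$ of $\der_\infty$. It passes through infinitely many grafted segments, and each segment is the weakly isomorphic image of a companion-to-leaf path in $\der$. By definition of cyclic proof, each such path goes through the right premise of $\untilli$ or $\untillii$. We must turn these infinitely many progress points into one progressing trace. Take the finitely many formulas $A\until B$ in the relevant end components. Since $\last{c(\lnsg')}=\last{\lnsg'}$ and the index shift given by invariant (I)(c) preserves the end component, the plan is a pigeonhole argument: choose $A\until B$ that progresses infinitely often. Then argue that traces of $A\until B$ persist between progress points. Trace values are never deleted except when principal, and then they move to $j+1$ in the right premise. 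So the trace can be threaded through consecutive segments by following the occurrence of $A\until B$ in the end component. For the SRP, each grafted segment begins and ends at a saturated sequent, because the companion and the leaf are saturated and invariant (I)(b) preserves saturation. So every infinite branch meets infinitely many saturated sequents.

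\textbf{Main obstacle.} The hardest step is linking progress points across segments into a single progressing trace. The cyclic-proof definition only guarantees that some $\untilli$ or $\untillii$ right premise lies on each cycle path. It does not require a trace connecting the companion's trace value to the leaf's trace value. I would try to exploit saturation: all interior components of the companion and the leaf are free of $\until$-formulas, so every trace value at a companion lives in its end component. The principal $\until$ in an $\untilli$ or $\untillii$ application on the cycle must then descend from a formula of the end component at the companion. A König-style argument over the finitely many formulas of the end component, which is the same on both sides of each cycle, should then yield an infinite trace hitting infinitely many progress points. A secondary technicality is that the Shifting Lemma is applied to subproofs rooted at companions, which may contain cyclic leaves whose companions lie lower. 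I would handle this by first normalising $\tuple{\der,L,c}$ so that every companion lies on the path from the root to its leaf and cycles are nested. Alternatively, I would run the shifting uniformly on the whole of $\der$, using the fact that all rule applications act at or beyond the shifted index.
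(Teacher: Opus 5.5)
Your construction is essentially the paper's. You unravel at companions with the Shifting Lemma; the paper sidesteps your ``secondary technicality'' by always unravelling at a \emph{minimal} companion, and by definition companions already lie on the root-to-leaf path. You take the union of the growing trees and get initial leaves from invariant (I). You get the SRP because every grafted segment starts and ends at a saturated LNS. You also correctly locate the weak point: a cycle only has to pass through a right premise of $\untilli$ or $\untillii$, and no trace is required to link companion and leaf.

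The repair you sketch cannot work, and two of its ingredients are false. First, a principal $A \until B$ disappears in the \emph{left} premise of $\untilli$/$\untillii$, so its trace ends there. Second, the until-formula unfolded on a cycle need not be traceable to the companion at all. It can be created fresh, e.g.\ by $\impr$ acting on $\neg(A \until B)$ in a consequent. In fact, with the definition of $\calccyc$-proof as stated, the theorem is false, so no argument can close this step.

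\textbf{Counterexample.} Put $\top := \bot \imp \bot$, $D_1 := \top \until q$, $D_2 := \top \until \neg q$ and $U_\ell := \top \until \neg D_\ell$. The LNS $\neg U_1, \neg U_2 \sar$ says that $q$ does not both hold and fail infinitely often, so the alternating state sequence refutes it. Two $\impl$ steps reduce it to $\sar U_1, U_2$, with side premises closed by $\botl$.

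From any saturated $\lnsh \sep S \sar U_1, U_2$ with $S \subseteq \{D_1, D_2\}$ ($\lnsh$ possibly empty), apply $\untilrii$ to $U_1$, then $\impr$, then $\untilri$ to $U_2$, then $\impr$. The side premises close because $\top$ lands in a consequent. This reaches $\lnsh \sep D_1, D_2 \sar \bot \sep {} \sar U_1, U_2$. Now apply $\untilli$ to $D_1$ and to $D_2$, plus $\impl$ to discard $\top$ or split $\neg q$. The left/left branch closes. The other three branches end in
$\lnsh \sep q \sar \bot \sep D_2 \sar U_1, U_2$,
$\lnsh \sep {} \sar q, \bot \sep D_1 \sar U_1, U_2$ and
$\lnsh \sep {} \sar \bot \sep D_1, D_2 \sar U_1, U_2$.
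All three are saturated, and each is reached through a right premise of $\untilli$.

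Only three end components occur, so within four such steps every branch meets two saturated LNSs with equal end components. All the cycle conditions then hold. This gives a $\calccyc$-proof of an invalid LNS, which by Theorem~\ref{thm:soundness} has no $\calci$-proof.

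To see the failure concretely, take the cycle from $q \sar \bot \sep D_2 \sar U_1, U_2$ via ``$\neg q$'' back to end component $D_2 \sar U_1, U_2$. The companion's trace $\bracket{D_2, 1}$ dies in a left premise. The $D_1$ and $D_2$ that do progress are born fresh from $\neg D_1$ and $\neg D_2$.

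The paper's K\H{o}nig argument on $T'_{\lnspath}$ has the same hole. It assumes every succinct $\untilli$/$\untillii$ step on $\lnspath$ adds a node to the tree of traces issuing from the start of $\lnspath$. That fails here; allowing traces to start later instead makes the tree infinitely branching.

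A correct version needs a trace condition in the definition of $\calccyc$-proof, e.g.\ a global trace condition on the graph obtained by linking cyclic leaves to their companions. With that condition, progress transfers to the unravelling along the weak isomorphisms (using (I)(c)), and the rest of your argument stands.
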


\begin{proof} Let $\tuple{\der_0,L_0,c_0}$ be a cyclic proof of $\lnsg$ and let $\lnsh_{0} := c_0(\lnsh_{1})$ be a \emph{minimal companion}, i.e., a companion node such that no other companion occurs on the path from $\lnsh_{0}$ to the root of the cyclic proof. We remark that if $\tuple{\der_0,L_0,c_0}$ is a cyclic proof without a cycle, then it already constitutes a non-wellfounded proof, so we may assume w.l.o.g. that $\tuple{\der_0,L_0,c_0}$ has at least one cycle. 

Let $\tuple{\der,L,c}$ denote the cyclic proof rooted at $\lnsh_{0}$ in $\tuple{\der_0,L_0,c_0}$. By Lemma~\ref{lem:shifting}, we can transform $\tuple{\der,L,c}$ into a cyclic proof $\tuple{\der',L',c'}$ of $\lnsh_{1}$ such that $\lnsh_{1} = c'(\lnsh_{2})$ for some leaf $\lnsh_{2}$ in $L'$. We construct a new cyclic proof $\tuple{\der_1,L_1,c_1}$ of $\lnsg$ as follows: first, we `paste' $\der'$ above $\lnsh_{1}$ in $\prf_0$ to obtain a new derivation $\der_1$. Second, we set $L_1 := (L_0 \setminus \set{\lnsh_{1}}) \cup L'$. Third, we define $c_1$ as follows:
\begin{equation*}
c_1(\lnsf) = 
\begin{cases} 
c_0(\lnsf) & \text{if $\lnsf \in L_0 \setminus \set{\lnsh_{1}}$,} \\
c'(\lnsf) & \text{if $\lnsf \in L'$.}
\end{cases}
\end{equation*}
Observe that $\tuple{\der_1,L_1,c_1}$ is a cyclic proof of $\lnsg$. Moreover, note that $\lnsh_{0}$ is no longer a minimal companion in $\tuple{\der_1,L_1,c_1}$, i.e., the cycle from $\lnsh_{1}$ to $\lnsh_{0}$ has been shifted upward and replaced by a cycle from $\lnsh_{2}$ to $\lnsh_{1}$. By successively repeating the above process for minimal companions, we obtain an infinite sequence $\tuple{\der_0,L_0,c_0}$, $\tuple{\der_1,L_1,c_1}$, $\ldots$ of cyclic proofs. 

Recall that each $\der_i$ is a tree of LNS with root $\lnsg$, i.e. $\der_i=(T_i, \leq_i,\lnsg)$ where $T_i$ is a set of nodes (labeled by LNSs) and $(T_i,\leq_i)$ is a poset, such that for each $t \in T_i$, the down-set of $t$ in $T_i$ is well-ordered. Observe that for all $i \in \mathbb{N}$, $T_i \subseteq T_{i+1}$ and $\leq_i \ \subseteq \ \leq_{i+1}$. We define the limit of the sequence of cyclic proofs to be $\pi_\omega := (T_\omega, \leq_\omega, \lnsg)$ where $T_\omega := \bigcup_{i \in \mathbb{N}} T_i$ and $\leq_\omega \ := \bigcup_{i \in \mathbb{N}} \leq_i$. We will now argue that $\pi_\omega$ is a non-wellfounded proof of $\lnsg$.

Suppose $\lnsh$ is a leaf of $\pi_\omega$. Then, there exists an $i \in \mathbb{N}$  such that $\lnsh$ is a leaf of $(\pi_i, L_i, c_i)$. Since $\lnsh$ remains a leaf in $\pi_\omega$, it was not unraveled at any subsequent stage, so $\lnsh \notin L_i$ and hence $\lnsh$ is an initial sequent. Next, let $\branch = \{\lnsg_m\}_{m \in \mathbb{N}}$ be an infinite branch of $\pi_\omega$. By construction, $\branch$ contains infinitely many saturated sequents and infinitely many applications of $\untilli$ or $\untillii$. This follows from the fact that we unravel along cycles, each of which includes an application of $\untilli$ or $\untillii$ between a companion and its cyclic leaf--both of which are saturated. Hence, $\pi_\omega$ satisfies the SRP.

It remains to show that $\branch$ contains a progressing suffix. Let $\lnspath = \{\lnsg_m\}_{m \geq j}$ be the suffix of $\branch$ starting at $\lnsg_j$, where $\lnsg_j$ is the lowermost LNS in $\branch$ such that $\lnsg_j$ is the minimal companion of a leaf in $L_i$ for some cyclic proof $(\pi_i, L_i, c_i)$. Consider the tree of traces $T_\lnspath$ on $\lnspath$ (with a fresh root node adjoined) and let $T'_\lnspath$ be the tree obtained from $T_\lnspath$ by identifying consecutive nodes labeled by the same trace value. The tree $T'_\lnspath$ is finitely branching. Moreover, $T'_\lnspath$ is infinite: since $\lnspath$ passes through infinitely many saturated sequents, infinitely many of the applications of $\untilli$ or $\untillii$ along $\lnspath$ must be succinct, and each such application contributes a distinct node to $T'_\lnspath$. Thus, by K\H{o}nig's Lemma, $T'_\lnspath$ contains an infinite branch, which constitutes a progressing trace through $\lnspath$. Hence, $\branch$ contains a suffix that is a progressing path, and so $\pi_\omega$ is a non-wellfounded proof satisfying the SRP.
\end{proof}

We obtain the soundness of the cyclic LNS system as a corollary of Theorems~\ref{thm:soundness} and~\ref{thm:cyclic-to-non-wf}. 

\begin{corollary}[Soundness of $\calccyc$]\label{cor:soundness-cyclic-proofs}
If $\lnsg$ has an $\calccyc$-proof, then $\lnsg$ is valid.
\end{corollary}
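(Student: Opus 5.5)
The corollary follows by chaining two results already established. Suppose $\lnsg$ has an $\calccyc$-proof $\tuple{\der,L,c}$.

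First, apply Theorem~\ref{thm:cyclic-to-non-wf}. It transforms $\tuple{\der,L,c}$ into an $\calci$-proof $\der_\omega$ of the same root LNS $\lnsg$, and this proof satisfies the SRP. We only need $\calci \proves \lnsg$, so the SRP part can be discarded. The root is preserved because the construction in Theorem~\ref{thm:cyclic-to-non-wf} only pastes shifted subproofs above cyclic leaves. It never modifies the tree below a minimal companion, so every $\der_i$, and hence the limit, still has $\lnsg$ as its root.

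Second, apply Theorem~\ref{thm:soundness} (soundness of $\calci$) to $\der_\omega$. This gives that $\lnsg$ is valid, which is the claim.

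The composition itself is routine. Whatever difficulty there is lies in Theorem~\ref{thm:cyclic-to-non-wf}, which we take as given. Two points there would be worth re-checking if one were being careful.

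The first is the degenerate case of a cyclic proof with no cycles. Such a proof is a finite derivation whose leaves are all initial sequents, so it is already an $\calci$-proof and soundness applies directly.

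The second is that the limit $\der_\omega$ really meets the global trace condition. This rests on the shifting lemma (Lemma~\ref{lem:shifting}). Each cycle passes through the right premise of $\untilli$ or $\untillii$, and that premise yields a progressing trace pair. König's lemma applied to the trace tree then produces a progressing trace along every infinite branch.

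Granting these, Corollary~\ref{cor:soundness-cyclic-proofs} is immediate.
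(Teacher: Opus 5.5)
Your composition is exactly the paper's proof of the corollary: Theorem~\ref{thm:cyclic-to-non-wf} followed by Theorem~\ref{thm:soundness}. But the point you flag and then wave through is a genuine gap, and the paper shares it: the limit of the unravelling need not satisfy the global trace condition. Your justification slides from ``each cycle passes through a right premise of $\untilli$ or $\untillii$'' to ``some single trace progresses infinitely often''. These are different claims. The $\until$-formula that progresses in one cycle segment can be principal again in the next segment, with the branch taking the \emph{left} premise, while a fresh copy of it is regenerated from the consequent. Then the traces starting from any fixed node form a finite tree, and K\"onig's lemma gives nothing.

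This does happen. Let $\top := \bot \imp \bot$, $U_1 := p \until q$, $U_2 := r \until \neg q$, $\psi := U_1 \imp (U_2 \imp \bot)$ and $E := \top \until \psi$. Start from $\sar E$, or from any saturated $\lnsi \sep \Theta \sar E$ with $\Theta \subseteq \set{U_1,U_2}$. Apply $\untilrii$ to $E$; the left premise closes via $\impr$ on $\top$ and $\botl$. Then apply $\impr$ twice, $\untilli$ to $U_1$ and to $U_2$, and $\impl$ to $\neg q$ wherever it occurs. The branch taking both left premises closes. Every other open leaf is a saturated LNS $\lnsi \sep \Lambda \sar \Xi \sep \Theta' \sar E$, one component longer, with $\Lambda \sar \Xi$ free of complex formulas and $\Theta' \in \set{\set{U_1},\set{U_2},\set{U_1,U_2}}$. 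Each such leaf is reached through the right premise of some $\untilli$. Prune every branch at the first repeated end component; by pigeonhole this occurs within four such steps. The result is a finite tree meeting every clause of the definition of a $\calccyc$-proof of $\sar E$.

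Yet $\sar E$ is invalid. If $p$ and $r$ hold everywhere and $q$ holds exactly at even positions, then $U_1$ and $U_2$ hold at every position, so $\psi$ never does. Consider the branch of the unravelling that alternates between the two mixed choices ($U_1$ left with $U_2$ right, then $U_1$ right with $U_2$ left). On it, every trace dies after progressing at most once, so there is no infinite trace at all. Hence, with cyclic proofs as defined, the corollary is false. The repair must go into the definition of a $\calccyc$-proof: some trace condition tying companions to cyclic leaves, e.g.\ a global one on the cyclic proof graph. Only then can a composition like yours go through.
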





\section{Concluding Remarks}\label{sec:conclusion}

There are several directions for future research. First, it would be natural to investigate whether--and under what conditions--the proof transformations between non-wellfounded and cyclic proofs developed here can be extended to more expressive multisequent formalisms, such as labeled sequents. Such generalizations would deepen our understanding of cycle recognition and unraveling beyond the linear nested setting. Second, while non-wellfounded Gentzen-style sequent calculi for $\ltl$ were given in~\cite{BruLan08,KokStu16,AloEtAl20}, the problem of syntactic cut-elimination was left open in all three works. We conjecture that the additional structural expressiveness of $\calci$ enables a syntactic cut-elimination proof, which we aim to investigate. Finally, it would be interesting to study proof transformations between our linear nested sequent calculi and the non-wellfounded Gentzen systems of~\cite{BruLan08,KokStu16,AloEtAl20}. 
Such translations would clarify the relationship between these formalisms and allow results to be transferred between them--in particular, should cut-elimination be established for $\calci$, such 
translations could provide a pathway to resolving the analogous open problem for the Gentzen systems of~\cite{BruLan08,KokStu16,AloEtAl20}.


\bibliographystyle{eptcs}
\bibliography{bibliography}





\end{document}